\definecolor{myblue1}{rgb}{0, 0.2784, 0.6705}
\newcommand{\change}[1]{\textcolor{myblue1}{#1}}
\begin{document}

\title{An Approximation Algorithm for Joint Caching and Recommendations in Cache Networks}


\author{Dimitra Tsigkari and Thrasyvoulos Spyropoulos
\thanks{The current version has been accepted for publication in {\sc IEEE Transactions on Network and Service Management}, \href{https://ieeexplore.ieee.org/document/9711560}{DOI: 10.1109/TNSM.2022.3150961}, date of acceptance: February 6, 2022.}
\thanks{The authors are with Eurecom, Biot, France. Email: \{dimitra.tsigkari, thrasyvoulos.spyropoulos\}@eurecom.fr.}
\thanks{Part of this work has appeared in the proceedings of the IEEE International Symposium on a World of Wireless, Mobile and Multimedia Networks~(WoWMoM) 2020.~\cite{tsigkari2020wowmom}}
\thanks{ This work has been supported by the French National Research Agency under the ``5C-for-5G'' JCJC project with ref. number ANR-17-CE25-0001.}}

\maketitle

\begin{abstract}
Streaming platforms, like Netflix and YouTube, strive to offer high streaming quality (SQ), in terms of bitrate, delays, etc., to their users. Meanwhile, a significant share of content consumption of these platforms is heavily influenced by recommendations. In this setting, the user's overall experience is a product of both the user’s interest in a recommended content, \emph{i.e.}, the recommendation quality (RQ), and the SQ of this content. However, network decisions (like caching) that affect the SQ are usually made without considering the recommender's actions. Likewise, recommendations are chosen independently of the potential delivery quality. In this paper, we define a metric of streaming experience (MoSE) that captures the fundamental tradeoff between the SQ and RQ. We aim to jointly optimize caching and recommendations in a generic network of caches, with the objective of maximizing this metric. This is in line with the recent trend for content providers to simultaneously act as Content Delivery Network owners, implying that the same entity may handle both caching and recommendation decisions. We formulate this joint optimization problem and prove that it can be approximated up to a constant factor. To the best of our knowledge, this is the first polynomial algorithm to achieve a constant approximation ratio for the joint problem. Moreover, our numerical experiments show important performance gains of our algorithm over baseline schemes and existing algorithms in the literature.
\end{abstract}

\begin{IEEEkeywords}
caching, recommendation systems, wireless networks, multimedia streaming services
\end{IEEEkeywords}

\section{Introduction}
\label{sec:intro}
\subsection{Motivation}
On platforms of streaming services such as YouTube, Netflix, and Spotify, state-of-the-art recommendation systems are employed in order to help the users to navigate through their catalogues. Traditionally, the goal of the recommender is to offer personalized recommendations based on the user's interests, {\it i.e.}, present contents of an ever-evolving catalogue that are relevant to her tastes. Therefore, the user's engagement with the service is closely related to the  recommendation quality~(RQ)~\cite{gomez2016netflix}.
Moreover, these recommendations are responsible for a large share of the generated user requests: on Netflix, $80\%$ of requests come from the recommendations that appeared to the user \cite{gomez2016netflix}.

At the same time, the streaming quality~(SQ) of the delivered content plays a significant role in the overall user's experience on the service. This quality can be characterized, for example, by metrics of quality of service~(QoS) such as bitrate, initial delays, etc.,  or/and by metrics of quality of experience~(QoE), \emph{i.e.,} QoS as perceived by the user. 
Furthermore, the SQ is closely related to the user engagement to the service. It has been shown that, on platforms of video streaming services, low bitrate can lead to an increase in the abandonment rate~\cite{Nam2016_catvideos}. When a user abandons a viewing session, or the service all together, there is an associated revenue loss for the Content Provider~(CP). In fact, Akamai estimates this loss for platforms where ads are displayed (like YouTube) in~\cite{akamai_whitepaper_OTT}. Caching mechanisms within the Content Delivery Networks~(CDN) that store  contents in caches close to the user can ensure a  better delivery  in terms of SQ~\cite{Ott:caching} while  alleviating the backhaul link  traffic. In future architectures, envisioned for wireless networks beyond 5G, it has been widely argued that there will exist a very large number of  small caches placed at the edge of the network and the caching problem in this ``small cache'' setting has been the focus of entire lines of research~(\emph{e.g.},~\cite{femto_JOURNAL2013}  or see~\cite{JSAC:caching-survey} for a survey on this subject). Therefore, making the right caching decisions is crucial in today's and, more importantly, in future architectures. This is exactly where the influence of the recommendations on users' requests could come into the picture. For this reason, some recent works propose caching and recommendation policies that take into account their interplay \cite{cache-centric-video-recommendation , giannakas2018show-me-cache, sermpezis2018sch-jsac, chatzieleftheriou2019joint-journal, qi2018optimizing, liu2019deep, Chen:Joint-Globecom18}. In fact,  it has been argued that such an approach could be beneficial not only for the network's performance but also for the user~\cite{kastanakis2020network}.


At first glance, content caching and recommendation systems seem to be independent, since they are usually handled by two different entities: the CP  and a 3rd party CDN~(like Akamai). However, major CPs like Netflix and Google started partnering with Internet Service Providers (ISPs) to implement their own CDN solutions inside the network: Netflix Open Connect and Google Global Cache. In particular, Netflix partners with ISPs and provides them with Open Connect caches that are  implemented within the ISP network in order to localize the traffic as close to the user as possible~\cite{netflix_overview}. Netflix is responsible for filling the caches during off-peak hours~\cite{netflix_fill_patterns} and, of course, for choosing user recommendations~\cite{gomez2016netflix}. Therefore, caching and recommendation decisions  can already be handled jointly in today's architectures. The trend towards end-to-end network slicing in future wireless networks further supports such an approach. In this setting, CPs like the aforementioned ones  will own their own virtual network (slice) including communication, storage, and CPU resources at the base stations of the Radio Access Network~(RAN). Moreover, handling caching and recommendations together seems appropriate in future architectures like the ``femtocaching'' scenario~\cite{femto_JOURNAL2013}. In this setting, each cache will cover fewer users (than today's CDN caches) and receive a limited amount of requests. This would render predicting content popularities and thus caching decisions particularly challenging~\cite{Paschos-infocom2016}.

Motivated by the above, in this paper, we define a metric of streaming experience~(MoSE) that captures the user's interest in a recommended content and in the quality that this will be delivered. The MoSE is modeled as a function of caching and recommendation decisions. Next, we address the problem of maximizing users' MoSE in a setup where a CP jointly  controls these decisions. Furthermore, a strong motivation behind our work is the fact that Netflix  steers user recommendations towards popular (and possibly cached) contents. This is done, for example, through the ``Trending Now” section of every user’s main page. In particular, ``Trending Now" recommendations include short-term popular contents “combined with the right dose of personalization”~\cite{gomez2016netflix}. Meanwhile, at Netflix, caching decisions are based on (predicted) content popularities~\cite{netflix_and_fill}. 
Besides, we will show later that caching without taking into account the user recommendations and biasing recommendations later (to favor cached contents) is suboptimal.  This further corroborates our approach of jointly optimizing caching and recommendations.   Moreover, a recent development that supports such an approach is that Netflix mobile app introduced the feature ``Downloads for You''~\cite{netflix_downloads_for_you}. When a user enables this feature, she allows Netflix to choose and proactively prefetch recommended contents on the user's device that the user can then watch while she is offline or over unstable Internet connections.

Despite the arguments given above supporting a joint approach, many of the related works on caching and recommendations still focus on one side of the problem, {\it e.g.}, network-friendly recommendations~\cite{cache-centric-video-recommendation},~\cite{giannakas2018show-me-cache}, or recommendation-aware caching policies~\cite{sermpezis2018sch-jsac}. Some works that do try to modify both  caching and recommendations are usually based on heuristics~\cite{chatzieleftheriou2019joint-journal, qi2018optimizing, liu2019deep,  Chen:Joint-Globecom18}.  In addition,  most of the aforementioned works study the problem from the CP's point of view, {\it i.e.}, with the goal of maximizing cache hit rate, without taking into account the user's experience.

\subsection{Our approach and contributions}
In this paper,  we formulate and analytically study the problem of jointly optimizing both variables : (i) what content to store at each cache, and (ii) what content to recommend to each user, based on their location in the caching network and their predicted preferences.
In this direction, our main contributions are the following:
\begin{itemize}[leftmargin=*]
\item  We introduce the  metric of streaming experience~(MoSE) for a recommendation-driven content application that is expressed as a balanced sum of SQ~(as a function of the caching variables) and RQ~(as a function of the recommendation variables).  Based on this model, we formulate the problem of optimally choosing both sets of variables towards maximizing users' MoSE. 
    
\item The joint caching and recommendation problem has been shown before to be NP-hard and, thus, it cannot be solved optimally  in polynomial time (unless $P=NP$). 
However, we provide a polynomial-time algorithm  that approximates the optimum objective function value within a constant factor.
    
\item  Through numerical evaluations, we show a near-optimal performance and significant gains of our proposed policy over a variety of baseline schemes and existing heuristics for the joint problem.  We also implemented distributed versions of our policy and we show that important speedups can be achieved. Our evaluations were conducted on both real and synthetic datasets, and using realistic values for the problem parameters.

\end{itemize}

This paper extends our earlier work~\cite{tsigkari2020wowmom} by providing the proofs of the theoretical results and a comprehensive evaluation of our proposed algorithm in a variety of scenarios and for different input parameters. Finally, the current work discusses in depth the applications of the presented problem and the possible extensions.

\section{Problem Setup}
\label{sec:problem_setup}
\subsection{Caching Network}
We consider a set of $C$ caches with capacity $\mathcal{C}_j$, $j=1, \ldots, C$ and a content catalogue  $\mathcal{K}$.  Moreover, $  \mathcal{C}_j \ll |\mathcal{K}|$, $j=1, \ldots, C$, as is an important restriction in most caching setups\footnote{In fact, according to estimations of the size of Netflix catalogue \cite{Paschos-misconceptions} and the size of Open Connect appliances \cite{netflix_appliances},  the cache capacity of the appliances varies from $0.1\%$ to $2.3\%$ of the entire catalogue. \label{footnote-cache-capacity}} and especially in future wireless networks. We will consider both equal and variable-sized contents. In the second case, we denote by $\sigma_i$ the size of content $i$, where $i=1,\ldots, |\mathcal{K}|$.


\theoremstyle{definition} \newtheorem{def_caching}{Definition} 
\begin{def_caching}[Caching variables]
We let $x_{ij}$ be the binary variable, where $x_{ij}=1$ when the content $i$ is cached in cache $j$, and $x_{ij}=0$ otherwise.  We denote the corresponding matrix by $X=\{x_{ij}\}_{i,j}$.
\end{def_caching}


We consider a set $\mathcal{U}$ of users, each of which has access to a subset of caches. We denote this set by $\mathcal{C}(u)$ for user $u\in \mathcal{U}$. A request for content $i$ by  user $u$ is served by one of the caches belonging to $\mathcal{C}(u)$ where the requested content is stored, \textit{i.e.}, by one of the caches of the set
$ \{j: j\in \mathcal{C}(u) \text{ and } x_{ij}=1 \}$.
The access to a cache could be over multiple links (as in hierarchical caching or ICNs) or direct ({\it e.g.}, wireless connectivity to a nearby small cell~\cite{femto_JOURNAL2013}). For the purposes of our analysis, such networks can be represented as a generic bipartite graph between users and (associated) caches, as shown in Fig.~\ref{figure:illustration-example}. Specifically,  every edge of this graph has a weight $s_{uj}$, which denotes the cache-specific streaming quality\footnote{This assumption is realistic. At Netflix, for example, information on routes, network proximity to the users, etc. are gathered by the Open Connect caches and are sent regularly to the Netflix cloud~\cite{netflix_overview}.} that can be supported between user $u$ and cache $j$.  This quality  can be related to estimations on the QoS or QoE, \emph{i.e.,} delays, rebufferings, rate switching, etc. Moreover, it 
may differ from cache to cache, or may depend on channel quality, number of hops, scheduling policy, congestion level, etc.
Finally,  there is a large cache $C_0$ that fits all the  contents, {\it i.e.}, $x_{i0}=1$ for all $i\in \mathcal{K}$, and is accessible by all users, {\it i.e.}, $C_0 \in \mathcal{C}(u)$, for all $u\in \mathcal{U}$. This could be a large cache deep(er) in the network. For this reason and w.l.o.g., we let $s_{u0} < s_{uj}$, for all $j$ and $u$, as is commonly assumed ({\it e.g.}, in \cite{femto_JOURNAL2013}, \cite{poularakis2014}).

In our model, the caches are filled or updated during off-peak hours and, therefore, the cache allocation is static for the time period between two cache updates~(\emph{e.g.,} a day or a time window of a few hours). In this context, if a requested content is not cached in any of the caches, the content is served by the  large cache $C_0$. Although CDN caches traditionally have been operated with the use of dynamic caching policies~(where the cache is typically updated upon a cache miss), it has been argued that such policies perform poorly in the scenario of small caches~(under a non-stationary request model)~\cite{Paschos-infocom2016}. Hence, in such setups, similar models like ours are much more common in related works, {\it e.g.}, \cite{femto_JOURNAL2013}, \cite{poularakis2014}. More importantly, there is a trend towards static caching model even in today's architectures: Netflix, for example, is updating the Open Connect caches every night during off-peak hours~\cite{netflix_fill_patterns}. This further supports our model. Therefore, in what follows, all the problem parameters are  considered to be known for the time period between two cache updates. 
The presented caching setup is generic and could capture a variety of caching networks, such as femto-caching setups~\cite{femto_JOURNAL2013}, hierarchical CDN networks~\cite{borst2010}, etc. 

\subsection{Recommendations}

A list of $N_u$ recommended contents appears to the user $u\in \mathcal{U}$. This number may vary from user to user depending on the device used, as is the case in Netflix~\cite{gomez2016netflix}, for example.  The  recommendations are personalized and  might depend on  various factors such as user ratings ({\it e.g.}, via collaborative filtering), past user behavior, viewing times, etc.~\cite{adomavicius_rec_ranking}. State-of-the-art recommenders usually  assign a utility (or ``score'' or ``rank'') to each content for each user $u$~\cite{adomavicius_rec_ranking}, \cite{amatriain_building}. We denote by $r_{ui}\in [0,1]$ these utilities. Typically, the CP would select the $N_u$ items with the highest $r_{ui}$ to feature the recommendations list of user $u$. In this work, the recommendation decisions~(\emph{i.e.,} deciding which contents will appear to the user's recommendations list) are made not only based on the utilities $r_{ui}$ but also on the caching decisions. Therefore, our model uses the utilities $r_{ui}$~(that are derived from a typical recommender) as input for our problem.


\theoremstyle{definition} \newtheorem{def_recomm}[def_caching]{Definition}
\begin{def_recomm} [Recommendation variables]
 We let $y_{ui}\in \{0,1\}$ denote the binary variable for content $i$ being recommended  to user $u$ $(y_{ui}=1)$ or not $(y_{ui}=0)$. We denote by $Y$ the  matrix of $y_{ui}$. Then, the equations $\sum_{i\in \mathcal{K}} y_{ui} = N_u$, for all $u\in \mathcal{U}$, capture the fact that $N_u$ contents are recommended.
\end{def_recomm}

 \emph{Motivated by the discussion in Section~\ref{sec:intro}, we assume that both caching and recommendation decisions are made by the same entity ({\it e.g.}, Netflix).}

 \subsection{User model}  \label{subsec:setup_usermodel}
The user makes content requests, affected by the aforementioned recommendations, according to the following  model:

\begin{itemize}
    \item with probability $\alpha_u$ the user requests a recommended content. Each of the $N_u$ recommended items will be chosen with equal probability by the user;
    \item with probability $(1-\alpha_u)$ the user  ignores the recommendations and request a content $i$ of the catalogue with  probability $p_{ui}$.
    
\end{itemize}
Essentially, $\alpha_u$ captures the percentage of time a user $u$ tends to follow the recommendations. For example, it is estimated, on average, that $\alpha_u = 0.8$  on Netflix~\cite{gomez2016netflix}, but it can of course differ among users. 
 Assuming prior knowledge of  the user's disposition to follow the recommendations is common in related works ({\it e.g.}, \cite{giannakas2018show-me-cache}, \cite{chatzieleftheriou2019joint-journal}) and also in other  works on recommendation systems ({\it e.g.}, \cite{bressan_rec}). 
In practice, $\alpha_u$ might change over longer time intervals both  because of intrinsic changes to user behavior or due to decreasing/increasing trust in the recommender. Such changes could be addressed by dynamic or stochastic models, that are out of the scope of this work. In this work, we assume that our optimization happens at a smaller time scale, for which  the   parameter $\alpha_u$ is roughly constant~(but it can be recalibrated at longer intervals).

Furthermore, the assumption that each recommended content will be clicked with equal probability $1/N_u$ is also common in related works, and might hold in scenarios where the recommended items are ``unknown'' to the user, and hence she cannot evaluate their utility, before requesting them. 

As for the $p_{ui}$, they  capture the probability of user $u$ requesting the content $i$ outside of recommendations ({\it e.g.}, through the search bar). This could be an arbitrary distribution over the catalogue ({\it e.g.}, with probability mass only on content the user already ``knows''). Alternatively, given the utilities $r_{ui}$, a reasonable choice could also be the normalized values:

\begin{equation} \label{popularities_r} 
   p_{ui}=r_{ui}/ \sum_{k\in \mathcal{K}} r_{uk}. 
\end{equation}


 \subsection{Example} \label{subsec:setup_example}

   \begin{figure}[th]
\centering  
\includegraphics[width=8.85cm, trim={1.53cm 12cm 1.9cm 2.5cm},clip]{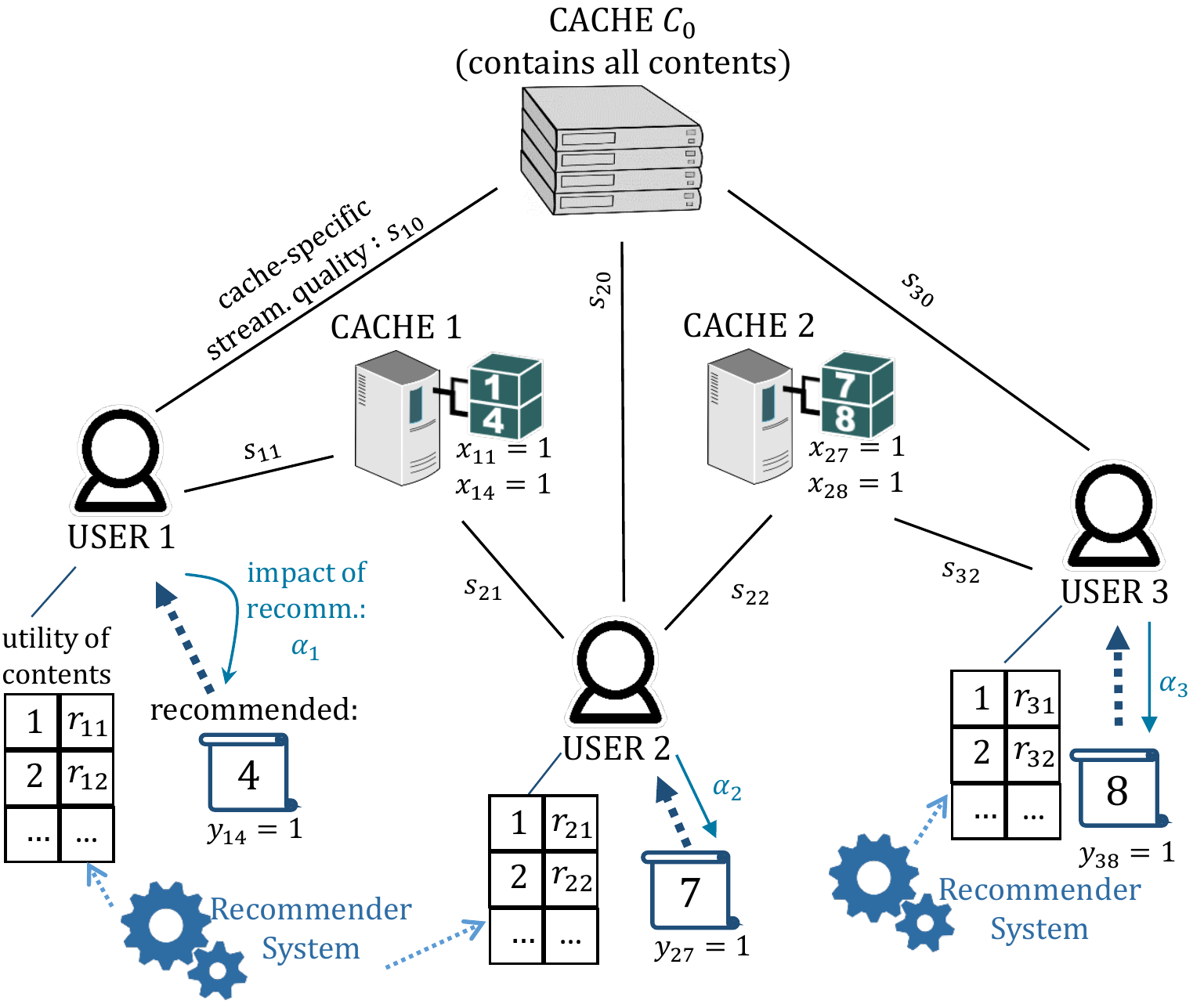} 
\caption{Illustration of the variables and  parameters considered for the joint caching and recommendations problem in a network of caches. Detailed description of this example can be found in Section~\ref{subsec:setup_example}.}  \label{figure:illustration-example}
\end{figure}

To better elucidate our model thus far, we present a small-scale example and Fig.~\ref{figure:illustration-example} that illustrates the variables and the parameters defined above. We consider a network of $C=2$ caches of capacity $2$, and a large cache $C_0$ containing the entire catalogue that consists of $|\mathcal{K}|=9$ equal-sized contents. As shown, cache $1$ contains contents $1$ and $4$ ({\it i.e.}, $x_{11}, x_{14}=1$), while $x_{1j}=0$ for any other $j$.  There are $3$ users present in the network.  An edge between a user $u$ and a cache $j$ means that user $u$ can fetch a content from cache $j$.   For example, for user $1$, we have that $\mathcal{C}(1) = \{0,1\}$. Note that such an edge might actually correspond to a path of multiple physical links. The corresponding cache-specific stream. quality  is indicated as an edge weight. In this example, a single  recommendation ($N_u=1$) appears to every user (illustrated by a dashed-line arrow). For example, the content $4$ is recommended to user $1$ ({\it i.e.}, $y_{14}=1$). If user~$1$ requests it, then it can be streamed from cache $1$ at streaming quality $s_{11}$. However, if user~$1$ requests, say, the content $2$, this will be fetched  from cache $C_0$ at a (lower) quality $s_{10}$. Lastly, arrows from users to recommendations display the probabilities $\alpha_u$.

 \begin{table}[!h] 
\caption{Notation summary} \label{notation_summary}
\centering 

\begin{tabular}{|c|l|}
\hline
\textbf{Notation} & \textbf{Description} \\
\hline
$\mathcal{K}$ & catalogue of contents\\
\hline
$\mathcal{U}$ & set of users in the network\\
\hline
$C_0$ & large cache containing the entire catalogue\\
\hline
$C$ & number of caches in the network ($C_0$ is excluded)\\
\hline

$\mathcal{C}_j$ & capacity of cache $j$, $j=0,\cdots,C$\\
\hline
$\mathcal{C}(u)$ & set of caches that user $u$ communicates with \\
\hline
$r_{ui}$ & utility of content $i$ for user $u$\\
\hline
$s_{uj}$  & cache-specific streaming quality between user $u$ and \\
($s_{u(j)}$)  &  cache $j$, $s_{u(j)}$ for ordered qualities (see Def.~\ref{def:order_statistics})\\
\hline
$\sigma_i$ & size of content $i$\\
\hline
$N_u$ & number of recommended contents for  user $u$\\
\hline

$\alpha_u$ & probability that user $u$ follows the recommendations \\
\hline

\multirow{2}{*}{$p_{ui}$} & probability that user $u$ requests content $i$   while not\\
& following the recommendations \\
\hline
\multirow{2}{*}{$\varphi$} & function of $r_{ui}$ that captures the impact of $r_{ui}$\\
& in the perceived recommendations quality\\
\hline
\multirow{2}{*}{$x_{ij}$} & caching variable, $x_{ij}=1$ when content $i$ is cached\\
&  in cache $j$, and $x_{ij}=0$ otherwise\\
\hline
\multirow{2}{*}{$y_{ui}$} & recommendation variable, $y_{ui}=1$ when content $i$\\ & is recommended to user $u$, and $y_{ui}=0$ otherwise\\
\hline

\end{tabular}
\end{table} 
\subsection{Metric of Streaming Experience~(MoSE)} \label{subsec:QoE}
In the context of media streaming platforms, the user's entertainment and contentment with the provided services are affected by the quality of the recommendations she receives, {\it i.e.}, if they are tailored to her tastes or not. On the other hand, it has been observed that low SQ ({\it e.g.}, low bitrates, rebufferings, etc.) greatly affects user experience and, most importantly (for CPs), retention/abandonement rates~\cite{Nam2016_catvideos}. In fact, experiments on video streaming services have shown that the user's overall experience depends on both the streaming quality and the user's interest in a content~\cite{li2016impact}. Moreover, some recent experimental evidence suggests that users might be willing to tradeoff (some) content relevance for (better) QoS\cite{kastanakis2020network}. In this direction, \emph{we define the metric of streaming experience as a twofold quantity}: one part relates to the recommendation quality;
the second part relates to  the streaming quality.

\theoremstyle{definition} \newtheorem{def_quality}[def_caching]{Definition}
\begin{def_quality}[Recommendations Quality - RQ]  \label{def_quality}
The recommendations quality, as perceived by user $u$, is equal to $\sum_{i \in \mathcal{K}} y_{ui} \varphi(r_{ui} )$, where $\varphi$ is any  non-decreasing function. 
\end{def_quality}

The function $\varphi$ represents the impact of a recommended content's utility $r_{ui}$ in the user's perceived RQ. It could be a linear function, or, more commonly, a concave function ({\it e.g.}, $\log(r_{ui})$) to capture diminishing returns beyond a minimum content utility. Moreover, we can demand a minimum quality $r_{\min}$ for any recommendation\footnote{This quantity can serve as an additional safeguard to support the earlier assumptions: {\it e.g.}, if any recommended content's utility is above $r_{min}$, then, $\alpha_u$, the user's trust in the recommendations,  will not be compromised.} if we define $\varphi$ as follows:
\begin{equation} \label{phi_rmin}
    \varphi(r_{ui}) =
    \begin{cases}
    \log(r_{ui}) & \text{if } r_{ui}\ge r_{\min},\\
    -\infty & \text{otherwise}.
    \end{cases}
\end{equation}

Regarding the SQ, this depends on which cache the requested content is streamed from. A content $i$ requested by user $u$ will be fetched by the ``best'' \emph{connected} cache that stores it,  as in~\cite{femto_JOURNAL2013}.

\theoremstyle{definition} \newtheorem{def:order_statistics}[def_caching]{Definition}
\begin{def:order_statistics} [Ordered cache-specific qualities] \label{def:order_statistics}
If $\mathcal{C}(u)$ is the set of caches that user $u$ has access to, we let $s_{u(1)}=\max\{s_{uj}, j \in \mathcal{C}(u)\}$ denote the maximum (cache-specific) quality for user $u$. Similarly, $s_{u(2)}$ denotes the second highest quality for $u$, and so forth\footnote{As the qualities $s_{uj}$ are sorted for every user, the notation $s_{u (k)_u}$ would be more appropriate. For simplicity, we drop the sub-index $u$.}.
\end{def:order_statistics}

 By definition, $s_{u|\mathcal{C}(u)|}=s_{u0}$, for every $u\in \mathcal{U}$, since we assumed that $s_{u0} < s_{uj}$, for all $j=1, \ldots, C$.  In the following lemma, the expected streaming quality~(SQ) is given as a function of the caching policy ($x_{ij}$), the values of $s_{uj}$, the recommendations ($y_{ui}$), and the users preferences ($r_{ui}$).

\theoremstyle{plain} \newtheorem{rate_formula}{Lemma} 
\begin{rate_formula}[(Expected) Streaming Quality- SQ] \label{rate_formula}
For a given cache allocation $X$, a content $i\in \mathcal{K}$ will be streamed to user $u$~(upon request) in the quality:
\begin{equation} \label{highest_rate}
  s_{u}(X,i):=  \sum_{j=1}^{|\mathcal{C}(u)|} \big[ s_{u(j)} x_{i(j)}  \prod_{l=1}^{j-1} (1-x_{i(l)}) \big],
\end{equation}
where $x_{i(j)}$ are similarly the caching variables assuming a $s_{uj}$-based ordering\footnote{Given the  $s_{uj}$-based ordering, $x_{i(j)}$ indicates if the  content $i$ is cached in the cache that offers the $j$-th highest quality for user $u$.}.
The expected streaming quality~(SQ) for a user $u$ is equal to: 
%
%

\begin{IEEEeqnarray}{lcl} \label{equation_avg_su}
\overline{s}_u = \alpha_u \sum_{i \in \mathcal{K}}  \dfrac{y_{ui}}{N_u} s_u(X,i) + (1-\alpha_u) \sum_{i \in \mathcal{K}}  p_{ui} s_u(X,i).\IEEEeqnarraynumspace
\end{IEEEeqnarray}
\end{rate_formula}

\begin{proof}
For a requested content $i\in \mathcal{K}$, $\prod_{l=1}^{j-1} (1-x_{i(l)})x_{i(j)}$ captures the fact that $i$ will be retrieved by the cache $(j)$ ({\it i.e.}, the cache with the $j$-th highest quality)  for lack of any other cache with higher quality  in $\mathcal{C}(u)$ where the content is cached~({\it i.e.}, $x_{i(l)}=0, l<j$). Then, this request will be served in the cache-specific quality $s_{u(j)}$.  Of course, if $i$ is not cached in any cache, it will be retrieved from  $C_0$ which is ranked last, resulting in low streaming quality.  Essentially,  $s_{u}(X,i)$ is the highest cache-specific quality associated to content $i$ for user $u$ among all the locations where $i$ is cached.
Finally, given that, upon request, the content $i$ will be streamed in the quality $s_u(X,i)$, and given the probabilities of such a request to happen~(through recommendations or not), the formula of the expected streaming quality, $\overline{s}_u$, easily follows.
\end{proof}

\theoremstyle{remark} \newtheorem{remark_psi}{Remark}
\begin{remark_psi} \label{remark:cachehit}
When estimating\footnote{Since our focus is proactive caching, we consider the estimated or average SQ. An interesting direction for future work would be to consider dynamic policies where the SQ is time-varying or even unknown.} the SQ, $s_{uj}$, the cache-specific streaming quality, can be chosen to be a function of QoS-related or QoE-related estimations. For example, $s_{uj}$ could be the estimated bitrate between user $u$ and cache $j$, or it could also include factors related to rebuffering probabilities, jitter, delays, etc., as is commonly considered in works related to streaming experience~\cite{batteram2010delivering}. Our framework is defined in such a way that is flexible enough to optimize whatever such  values(s) the CP deems appropriate or is able to estimate. 
Alternatively, $s_{uj}$ could be equal to:

\begin{equation} \label{psi_function_hit}
   s_{uj} =
    \begin{cases}
    1 & \text{if } j\in \mathcal{C}(u) \setminus C_0,\\
    0 & \text{otherwise}.
    \end{cases}
\end{equation}

\noindent In that case, $\sum_u \overline{s}_u$ estimates the expected cache hits for the (small) caches: upon a request, it counts  $1$ if the content is cached. In Table~\ref{tab:example_components}, we provide a variety of examples functions/values for $s_{uj}$ that exist already in the literature on multimedia streaming services.

\end{remark_psi}

\theoremstyle{definition} \newtheorem{definition:QoE}[def_caching]{Definition}
\begin{definition:QoE}[MoSE function] 
The metric of streaming experience for user $u\in \mathcal{U}$ as a function of the caching and recommendation variables is defined as 
$\overline{s}_u + \beta_u \sum_{i \in \mathcal{K}} y_{ui} \varphi(r_{ui} )$,
where $\overline{s}_u$ is given by \eqref{equation_avg_su} and $\beta_u > 0$ is a tuning parameter. 
Then the aggregate MoSE over all users is equal to:
\begin{equation} \label{def_QoE_function}
  f(X,Y):= \sum_{u \in \mathcal{U}}
 \big[  \overline{s}_u + \beta_u \sum_{i \in \mathcal{K}} y_{ui} \varphi(r_{ui} ) \big].
\end{equation}
\end{definition:QoE}

Modeling the streaming experience in this fashion implies a tradeoff between SQ and RQ, as discussed at the beginning of this section. The value of $\beta_u$ is the weight we attach to the RQ and quantifies  the importance of the RQ  compared to the SQ.  Moreover, the value of $\beta_u$  might differ from user to user: large $\beta_u$ means the user $u$ is more sensitive to the RQ, while small $\beta_u$ that she is more sensitive to the SQ. The choice of $\beta_u$ can depend, for example, on user behavior: we might want a small $\beta_u$~(\emph{i.e.,} priority to the SQ) for a user who often abandons the viewing session when the streaming quality is low. Similarly, one could imagine more complex models ({\it e.g.}, based on machine learning). However, it is beyond the scope of this paper to investigate such models or good choices for $\beta_u$, $\varphi$, and $s_{uj}$. Instead, our focus is to propose efficient algorithms for \emph{any} values and conforming functions.

We stress here that the MoSE was defined in such a generic way that can be adjusted according to the needs of the CP. Particularly, the CP can choose the quantities $s_{uj}$ and the function $\varphi$ based on available measurements in network, user behavior, etc. We provide a detailed example set of choices~(derived mostly from related work) for the different
MoSE components, as shown in Table~\ref{tab:example_components}.



\begin{table*}[h] 
  \centering

 \caption{ Example functions/values for MoSE components}  \label{tab:example_components}
  \begin{tabular}{|c|c|c|}
  \hline
  \textbf{Component} & \textbf{Example functions/values} & \textbf{Comments} \\
  \hline
  \begin{tabular}{c@{}c@{}}  $s_{uj}$ in SQ \\  (as in Lemma~\ref{rate_formula}) \end{tabular} & \begin{tabular}{p{0.32\textwidth}} \textbullet \quad QoS  metrics  \\ \hline \textbullet \quad estimated QoE as a function $\psi$ of QoS \\ \quad ($\psi$ can be linear, logarithmic, exponential etc.)  \\ \hline \textbullet \quad estimated multidimensional QoE \\ \hline \textbullet \quad cache hits (as defined in~\eqref{psi_function_hit})  \end{tabular} & \begin{tabular}{p{0.36\textwidth}}  \emph{e.g.,} related to bitrate, delays, packet loss, etc. \\ \hline  as in~\cite{fiedler2010generic}~(exponential QoS-QoE relation, IQX hypoth.)\\ or as in~\cite{logarithmic_QoE}~(logarithmic QoS-QoE relation)\\ \hline as in~\cite{skorin2012multi} or as in~\cite{batteram2010delivering} \\ \hline  requires no network-related measurements/estimations \end{tabular} \\
  \hline
  $\beta_u >0$ (weight factor) & \begin{tabular}{p{0.32\textwidth}}\textbullet \quad small $\beta_u$  \\ \hline \textbullet \quad large  $\beta_u$\end{tabular} &  \begin{tabular}{p{0.36\textwidth}} $\beta_u$ represents the sensitivity of user $u$ to the SQ and RQ, \\ small/large $\beta_u \rightarrow$ priority to the SQ/RQ\end{tabular}\\
  \hline
  \begin{tabular}{c@{}c@{}} function $\varphi$ in RQ \\ (as in  $\sum_{i \in \mathcal{K}} y_{ui} \varphi(r_{ui} )$) \end{tabular} & \begin{tabular}{p{0.32\textwidth}} \quad \\ \textbullet \quad logarithmic function\\ \hline \textbullet \quad  as defined in~\eqref{phi_rmin}\end{tabular} & \begin{tabular}{p{0.36\textwidth}} it captures the diminishing returns property \\ (as often observed in human perception~\cite{dehaene2003neural}) \\ \hline it captures minimum QR thresholds per user \end{tabular} \\
  \hline
  \end{tabular}
  
\end{table*}


\subsection{Joint recommendation and caching} \label{subsec:joint}
We ask the question: 
\emph{How can we make caching and recommendation decisions in order to maximize the MoSE?} 

To better understand the tradeoffs involved, we present a toy example depicted in Fig. \ref{toy_example}, and  two ``naive'' policies:
\theoremstyle{definition} \newtheorem*{Cpolicy}{Policy C, for ``Conservative''}
\begin{Cpolicy} 
This policy caches the $\mathcal{C}_j$ most popular contents (for the users connected to the cache $j$); it then recommends to each user $u$ the $N_u$ contents with the highest utility for this user, regardless of whether they are cached or not. This policy captures  today's status quo. \\
\noindent \textbf{Policy A, for ``Aggressive''.} This policy has the same caching policy as policy C, but it recommends only cached contents (the most relevant to the user  among them). It is closer to cache-friendly recommendation policies like the one in~\cite{sermpezis2018sch-jsac}.
\end{Cpolicy}

  \begin{figure}[htp]
\centering  
\includegraphics[width=8.6cm, trim={2.23cm 16.19cm 2.4cm 4.85cm},clip]{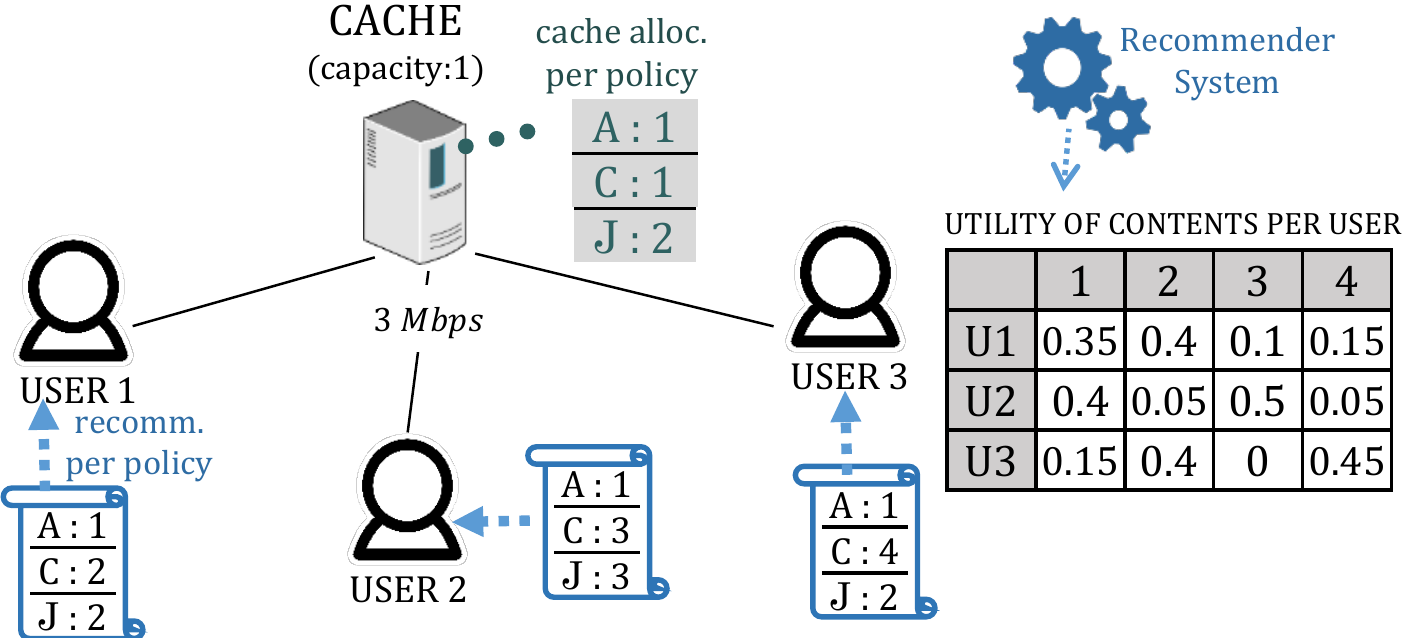}
\caption{Toy example of Sec.~\ref{subsec:joint}. On the left: illustration of the network together with the caching and recommendation decisions made by policies A, C, and J. On the right: the matrix of content utilities per user. } 
\label{toy_example}
\end{figure}

Note that both policies make the caching and recommendation decisions \emph{separately}. In this example, we will attempt to show the benefits of a policy that makes jointly these decisions. Referring to Fig. \ref{toy_example}, suppose we have a catalogue of $4$ equal-sized contents and $3$ users, all connected to the large cache $C_0$  (not shown in the figure, for simplicity) that contains all files and a smaller cache $C_1$ of capacity $1$. We measure the SQ through the estimated bitrate. All users can download a content from  $C_1$ or  $C_0$ with bitrate $3$ Mbps or $2$ Mbps respectively. 
We further assume that $N_u=1$ and  $\alpha_u =1$ for all users. We depict the utilities $r_{ui}$ on the right side.

Both policies would cache the  item with the highest aggregate utility, {\it i.e.}, content $1$. Policy A would then recommend this item to \emph{all} users. Policy C would instead recommend the item with highest utility per user, namely contents $2$, $3$ and $4$ respectively. It is easy to see that  policy C would lead to better RQ, while policy A would lead to better SQ.
However, we would like to know which policy is optimal with respect to maximizing the aggregate MoSE~(as expressed in~\eqref{def_QoE_function}).

A better option would be to cache content $2$, observing that this would then facilitate the recommendation decisions. More precisely, it allows one to recommend content $2$ to both users $1$ and $3$, achieving cache hits for them with maximum or close to maximum  RQ. Instead, for user $2$, the content $3$ is recommended~(with utility $r_{23} =0.5$), since content $2$ would seriously degrade the user's RQ~($r_{22}$ = 0.05 only). This policy which we refer to as ``J'' for Joint in Fig.~\ref{toy_example}, outperforms both A and C in this example in terms of MoSE~(for $\varphi$ being the logarithmic function and for most values of $\beta>0$).

In this example, it is easy to guess how to outperform the policies A and C (or even find the optimal one). However, this task becomes significantly harder for bigger scenarios
(when considering overlapping cache topologies, large content catalogues, multiple recommendations per user, etc.). 
To this end, in the next section, we formulate and analyse this problem, and propose an algorithm with approximation guarantees.

\section{Problem Formulation and analysis}
\label{sec:problem_formulation}
The optimization problem we are targeting is the following:

\theoremstyle{definition} \newtheorem*{QoE_problem_definition}{MoSE problem}
\begin{QoE_problem_definition} 
\begin{eqnarray} 
&\underset{X,Y}{\text{maximize  }}& f(X,Y)
\label{objective} \nonumber\\
&\text{subject to } & \sum_{i\in \mathcal{K}}  \sigma_i x_{ij} \leq \mathcal{C}_j \text{ for every }j= 1,\ldots, C;  \label{matroid_constraint} \\
&& \sum_{i \in \mathcal{K}} y_{ui} = N_u \text{ for every }u\in \mathcal{U} \label{N_recomm};\\
&& x_{ij}, y_{ui}\in \{0,1\}, \label{binary var}
\end{eqnarray}
\end{QoE_problem_definition}
\noindent where, according to~\eqref{highest_rate}, \eqref{equation_avg_su} and \eqref{def_QoE_function}, $f(X,Y)$ is equal to
\begin{equation*}
\sum_{u\in \mathcal{U}} \sum_{i\in \mathcal{K}} \big[\alpha_u \frac{y_{ui}}{N_u} s_u(X,i) + (1 - \alpha_u) \,p_{ui}\, s_u(X,i) + \beta_u y_{ui} \varphi(r_{ui})\big]
\end{equation*}
and $s_{u}(X,i):=  \sum_{j=1}^{|\mathcal{C}(u)|} \big[ s_{u(j)}\, x_{i(j)}  \prod_{l=1}^{j-1} (1-x_{i(l)}) \big]$ .
The constraints in \eqref{matroid_constraint} are the capacity constraints for every cache. In the case of equal-sized contents, \eqref{matroid_constraint} suggests that no more than $\mathcal{C}_j$  items  can fit in cache $j$, and the constraints in \eqref{N_recomm} suggest that each user receives $N_u$ recommendations. Finally, as expressed in \eqref{binary var}, the problem's variables, $x_{ij}$ and $y_{ui}$, are binary. 

\theoremstyle{plain} \newtheorem{NP-complete}[rate_formula]{Lemma} 
\begin{NP-complete} \label{NP-complete}
The MoSE problem is NP-hard.
\end{NP-complete}

\begin{proof}
An instance of the MoSE problem is the femto-caching problem in \cite{femto_JOURNAL2013} which is NP-hard.
\end{proof}

\subsection{Intuition on joint optimization}

As we saw in Lemma \ref{NP-complete}, even just the caching part ({\it i.e.}, maximizing in variable $X$) of the MoSE problem  is  hard to solve. For this simpler problem, the authors in~\cite{femto_JOURNAL2013} propose algorithms with approximation guarantees by exploiting submodularity properties of the objective. However, these algorithms do not account for the recommendation part of the MoSE problem (variable $Y$) and, therefore, the approximation guarantees do not extend to the joint problem. 

One could be tempted to extend the methodology in~\cite{femto_JOURNAL2013} by using both sets of variables $X$ and $Y$ as the ground set. However, the authors of~\cite{chatzieleftheriou2019joint-journal} prove that a subcase of the MoSE problem (when $\beta_u =0$)  is not submodular in $X$ and $Y$.

Furthermore, the authors of~\cite{sermpezis2018sch-jsac} consider  problem variants where the caching decision is ``recommendation-aware". They show that this problem is hard even for one cache, but manage  to retrieve submodularity properties and use the methodology of~\cite{femto_JOURNAL2013} to derive algorithms with approximation guaranties. However, their objective and problem setup do not contain recommendation variables~(among other things, the recommender's actions are fixed, and the caching policy simply knows what the recommender would do). It is thus significantly different than the MoSE problem. Finally, a brief qualitative comparison of these works is shown in Table \ref{tab:1}.

\begin{table}[!hbp] 
\centering 

\caption{State-of-the-art works on caching and/or recommendations}
  \begin{tabular}{|c|c|c|c|c|}
  \hline
    Related& \multicolumn{2}{c|}{Variables}  & How many &   Approx.  \\ 
    Works &Caching & Recomm. & caches?  &  guarantees \\
    \hline
\cite{femto_JOURNAL2013} & \ding{51} & \ding{55} & Network & \ding{51} \\
    \hline
\cite{sermpezis2018sch-jsac}& \ding{51} & \: \ding{55}$^{*}$ & Network & \ding{51} \\
    \hline
 \cite{chatzieleftheriou2019joint-journal} & \ding{51} &  \ding{51} &  Single cache & \ding{55} \\
    \hline
This work & \ding{51} &  \ding{51} &  Network & \ding{51} \\
    \hline
    
\multicolumn{5}{l}{ \small{${}^*$}\scriptsize{In \cite{sermpezis2018sch-jsac}, although the problem formulation does not contain any recom-}}  \\
    \multicolumn{5}{l}{\scriptsize{mendation variable, the caching variable is ``recommendation-aware".}}
  \end{tabular}

  \label{tab:1}
\end{table}


This discussion raises the question of whether the MoSE problem can be efficiently approximated and how. In the next section, we prove that this is indeed the case.  By first considering something akin to a primal decomposition~\cite{boyd_notes_decomposition} of the original problem (rather than handling variables $X$ and $Y$ at the same time as the ground set), we show that:
\begin{itemize}
    \item[(i)] for the problem on variables $Y$, {\it i.e.}, fixing $X$ (``inner" problem), the global maximizer can be found efficiently;
    \item[(ii)] the problem on variables $X$ (``outer" problem),  given the global maximizer of $Y$ (for any $X$), is in fact submodular.
\end{itemize}
 This property will  allow us to devise an algorithm for the joint problem that is polynomial in the problem size and, somewhat surprisingly, retains the approximation guarantees of the much simpler "caching-only" problems considered in~\cite{femto_JOURNAL2013} and~\cite{sermpezis2018sch-jsac}.

\subsection{Towards efficient algorithms} \label{effic_algor}

The key to our methodology is the following lemma. 
\theoremstyle{plain} \newtheorem{max_y_x}[rate_formula]{Lemma} 
\begin{max_y_x}\label{max_y_x}
The MoSE problem is equivalent to the problem:
\end{max_y_x}
\theoremstyle{definition} \newtheorem*{outer_problem}{Outer problem}
\begin{outer_problem} 
\begin{eqnarray} 
&\underset{X}{\mathrm{maximize }} & f^*(X):=f(X, \underset{Y}{\text{argmax}}f(X,Y)) \label{def_f*X}\\
&\text{subject to}& \eqref{matroid_constraint}, \eqref{N_recomm}, \text{ and }  \eqref{binary var}. \nonumber 
\end{eqnarray}
\end{outer_problem}

The equivalence of the two problems follows straightforwardly from the well known identity~\cite{boyd2004convex}: 
\begin{equation} \label{eq:decompos}
    \max_{X,Y} f(X,Y)= \max_X (\max_Y f(X,Y) ).
\end{equation}

\subsubsection{Inner problem and algorithm}

The first step would be to find a closed-form expression for $f^*$ for any cache allocation, {\it i.e.}, matrix $X$.
Hence, given  $X$, the problem of choosing the recommendation policy, {\it i.e.}, matrix $Y$,  is the problem of finding $f^*(X)$, as defined in \eqref{def_f*X}. We  formulate this problem:
\theoremstyle{definition} \newtheorem*{Recomm_problem}{Inner problem}
\begin{Recomm_problem} 
\begin{eqnarray*}
&\underset{Y}{\mathrm{maximize }} & f(X,Y)\\
&\text{subject to}&\eqref{N_recomm} \text{ and } y_{ui}\in \{0,1\}. \nonumber 
\end{eqnarray*}
\end{Recomm_problem}

\noindent The following lemma states that the inner problem can be decoupled into $|\mathcal{U}|$ problems.

\theoremstyle{plain} \newtheorem{lemma_max_y}[rate_formula]{Lemma} 
\begin{lemma_max_y} \label{lemma_max_y}
If $F_u^*(X):= \underset{Y}{\text{max  }}  \big( \overline{s}_u + \beta_u \sum_{i \in \mathcal{K}} y_{ui} \varphi(r_{ui} ) \big)$, for any $u$ and any placement $X$,  then $ f^*(X)= \sum_{u\in \mathcal{U}} F_u^*(X)$.
\end{lemma_max_y}

\begin{proof}
Given a cache placement $X$, it is easy to see that the recommendation decisions (variable $Y$) for a user do not interfere with the decisions for the other users. Moreover, the constraints in \eqref{N_recomm} are decoupled for every user.
\end{proof}

By \eqref{highest_rate} in Lemma \ref{rate_formula},  we can write $F_u^*(X)$ as follows.
\begin{eqnarray}
F_u^*(X)&=&
\underset{Y}{\text{max}} \bigg( \sum_{i\in \mathcal{K}} y_{ui} \big( \frac{\alpha_u}{N_u}  s_u(X,i) + \beta_u \varphi(r_{ui})\big) \bigg)\nonumber \\ &+& (1-\alpha_u)  \sum_{i\in \mathcal{K}} s_u(X, i) p_{ui}. \label{F_u_formula}
\end{eqnarray}

Next, we introduce the notion of V-value, which is the coefficient of  $y_{ui}$ in \eqref{F_u_formula}.
\theoremstyle{definition} \newtheorem{def:Rvalue}[def_caching]{Definition}
\begin{def:Rvalue} [V-value and ordered V-values] \label{def:Rvalue}
We define, as V-value of a content $i\in \mathcal{K}$ for user $u\in \mathcal{U}$ and for a given cache allocation $X$, the quantity
\begin{eqnarray} \label{def:eq:V}
V_{ui}(X):= \frac{\alpha_u}{N_u}  s_u(X,i) + \beta_u \varphi(r_{ui}), 
\end{eqnarray}
where $s_{u}(X,i)$ is defined in \eqref{highest_rate}. Similar to  Def. \ref{def:order_statistics},  we define the ordered $V_{ui}$ (sorted in decreasing order) as the ordered sequence $\{V_{u[k]}\}_{k\in \mathcal{K}}$\footnote{We do not use the same notation as in Def. \ref{def:order_statistics} because the ordering here is done with respect to the V-value and not the quality $s_{uj}$. In general, $V_{u(k)}(X) \neq V_{u[k]}(X)$, for all $u\in \mathcal{U}$ and $k=1,\ldots |K|$.}.
\end{def:Rvalue}

\theoremstyle{plain} \newtheorem{y_optimal}[rate_formula]{Lemma} 
\begin{y_optimal} \label{y_optimal}
For a given cache allocation $X$, we consider the matrix $Y'$ such that $y'_{u[k]}=1 $ for $k=1,\ldots, N_u$, and $y'_{u[k]}=0 $ otherwise, 
where $[k]$ is the content index associated to the $k$-th highest V-value for the user $u\in \mathcal{U}$. Then 
\begin{IEEEeqnarray}{lCl}
& F_u^*(X)&= \sum_{k=1}^{N_u} V_{u[k]}(X) +(1-\alpha_u)  \sum_{i\in \mathcal{K}} \big( s_u(X, i) p_{ui}\big), \label{F*formula} \IEEEeqnarraynumspace\\
 &\text{and }& f^*(X)=f(X, Y')= \sum_{u\in \mathcal{U}}  F_u^*(X). \label{extra_equality}
\end{IEEEeqnarray}
\end{y_optimal}

In words, the optimal solution for the inner problem is
to recommend to every user $u$ the $N_u$ contents with the highest V-value associated to the cache placement $X$. Note that this solution depends on the solution of the outer~(caching) problem and, hence, the inner problem needs to be solved as part of solving the outer problem, as shown in~\eqref{eq:decompos}.

\begin{proof}
It is straightforward to prove the result above through contradiction, {\it i.e.}, assuming some content $m$ with lower V-value than the $V_{u[N_u]}$ should have been included  instead.
\end{proof}

Based on Lemma~\ref{y_optimal}, the algorithm that finds the solution for the Inner Problem is summarized in Algorithm~1.

\begin{algorithm}[h]
\NoCaptionOfAlgo

\SetAlgoLined
\SetKwInOut{Input}{Input}
\SetKw{Return}{Return}
\SetAlgoLined

\Input{$\mathcal{U}$, $\mathcal{K}$, $N_u$, $X$, $\{\beta_u\}$, $\varphi$, $\{\alpha_u\}$, $\{r_{ui}\}$, $\{s_{uj}\}$}

Start with empty matrix $Y$ \\
 \For{every user $u\in \mathcal{U}$}{
  \For{ every content $i\in \mathcal{K}$}{
  Calculate $V_{ui}$\; 
  Store  $\{V_{u[k]}\}_{k=1}^{N_u}$~(in decreasing order). 
  \;
  }
  
  Set $y_{u[k]}=1$ for $k=1,\cdots, N_u$\;
 }
 \Return{$Y$}
 \caption{Algorithm 1: \textbf{Inner algorithm (subroutine)}}
\end{algorithm}

\subsubsection{Complexity of the inner algorithm}
The  internal for loop (lines $3-5$) consists of $|\mathcal{K}|$ calculations. 
Next, the complexity for the step of storing the $N_u$ highest  V-values is $O( \log N_u )$, however $N_u$ is considered to be a constant.
 Since these steps are repeated for every user, the total complexity of the inner algorithm is at most $O(|\mathcal{U}| \cdot |\mathcal{K}|)$.

\subsubsection{Outer problem and submodularity}
We proved that the optimal $Y$ can be found efficiently for the inner problem, given any cache allocation $X$. 
We want now to solve the outer problem~(defined in Lemma \ref{max_y_x}, \eqref{F*formula}, \eqref{extra_equality}), \emph{i.e.,} with respect to the caching variables $X$. While often caching problems can fit into the category of knapsack/general assignment problems, this is not the case for the outer problem. We note that the ``profit'' or gain of storing a content into a cache is not a constant and depends on the solution of the inner problem. Nevertheless, we will now prove some interesting properties of the outer problem that will lead us to an algorithm for the MoSE problem.

First, we extend $f^*$ as a set function. For any matrix $X$ we define  the corresponding placement $P_X$ of cached items:
\begin{equation*} 
P_X := \{ (i,j): x_{ij}=1, i\in \mathcal{K}, j=1, \ldots, C+1 \}.
\end{equation*}
Essentially, $P_X$ consists of the pairs (content, cache) of all the cached contents. Since, by definition, the large cache $C_0$ contains the entire catalogue ({\it i.e.}, $x_{i0}=1$, for all $i\in \mathcal{K}$), $X$ is  a $|\mathcal{K}| \times (C+1)$ matrix. 
In other words, $P_X$ belongs to the set     $\mathcal{P}:=P(\mathcal{K} \times \{1,\ldots,C+1\})$, 
 where $P(\mathcal{K} \times \{1,\ldots,C+1\})$ is the powerset of $\mathcal{K} \times \{1,\ldots,C+1\}$. Inversely, given a placement $P$, we can define the corresponding matrix $X_P$ such that $x_{ij}$ is equal to $1$, for every pair $(i,j)$ in $P$, and $0$ otherwise. Hence, from now on, $X$ and $P$ will be used interchangeably to denote the content allocation across the network of caches. 
We also define the subsets of a placement $P$  representing the storage of  the cache $m$:
 $P^{(m)}:= \{(i,m) \in P \}$.
We can thus extend  $F^*_u$, $f^*$, $s_u$ and $V_{ui}$ to the ground set $\mathcal{P}$. 

\theoremstyle{plain} \newtheorem{monotone}[rate_formula]{Lemma} 
\begin{monotone} \label{monotone}
The set function $F_u^*$ is monotone increasing for all $u\in \mathcal{U}$.
\end{monotone}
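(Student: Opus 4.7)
The plan is to establish monotonicity of $F_u^*$ as a set function on $\mathcal{P}$ by decomposing it according to~\eqref{F*formula} into two summands and showing each is monotone, with the foundation being the monotonicity of $s_u(\cdot,i)$ for every fixed content $i$.

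First, I would reinterpret the expression for $s_u(X,i)$ in~\eqref{highest_rate}: because of the telescoping structure of the products $\prod_{l=1}^{j-1}(1-x_{i(l)})$, the formula picks out the largest $s_{u(j)}$ among those indices where $x_{i(j)}=1$. In other words, $s_u(X,i)$ equals the maximum of $s_{uj}$ over the caches $j\in\mathcal{C}(u)$ that actually store $i$ (falling back to $s_{u0}$ via the large cache $C_0$ when no small cache stores $i$). Hence if $P\subseteq P'$ in $\mathcal{P}$ and we write $X=X_P$, $X'=X_{P'}$, the set of caches storing $i$ under $X'$ is a superset of the one under $X$, so $s_u(X,i)\le s_u(X',i)$; i.e., $s_u(\cdot,i)$ is a monotone non-decreasing set function.

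Next, since $\alpha_u\ge 0$, $\beta_u\ge 0$, and the term $\beta_u\varphi(r_{ui})$ is independent of $X$, each $V_{ui}(\cdot)$ defined in~\eqref{def:eq:V} inherits monotonicity from $s_u(\cdot,i)$. To handle the top-$N_u$ sum cleanly, I would rewrite it as a maximum over size-$N_u$ subsets of the catalogue,
\[
\sum_{k=1}^{N_u} V_{u[k]}(X) \;=\; \max_{S\subseteq\mathcal{K},\,|S|=N_u}\,\sum_{i\in S} V_{ui}(X),
\]
so it becomes a pointwise maximum of finitely many monotone set functions. Monotonicity is preserved: for $P\subseteq P'$, letting $S^\star$ attain the maximum on the left for $P$,
\[
\sum_{k=1}^{N_u} V_{u[k]}(P) \;=\; \sum_{i\in S^\star} V_{ui}(P) \;\le\; \sum_{i\in S^\star} V_{ui}(P') \;\le\; \sum_{k=1}^{N_u} V_{u[k]}(P').
\]

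Finally, the remaining summand $(1-\alpha_u)\sum_{i\in\mathcal{K}} p_{ui}\, s_u(X,i)$ is a non-negative linear combination of the monotone set functions $s_u(\cdot,i)$, hence monotone. Since $F_u^*$ is the sum of these two monotone pieces, it is itself monotone non-decreasing, which concludes the argument. The proof is essentially routine once the key observation is made that $s_u(\cdot,i)$ is a ``max over a growing set'' of cache qualities; the only mild subtlety is that the top-$N_u$ sum has to be written as a max over $N_u$-subsets to avoid worrying about how the identities of the indices $[k]$ themselves shift when $X$ grows.
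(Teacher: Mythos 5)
Your proof is correct and follows essentially the same route as the paper's: establish that $s_u(\cdot,i)$ is monotone for each content $i$ (the ``highest quality among caches storing $i$'' interpretation), deduce monotonicity of the $V$-values, and conclude via the decomposition of $F_u^*$ in~\eqref{F_u_formula}/\eqref{F*formula}. Your rewriting of the top-$N_u$ sum as a maximum over $N_u$-subsets is just a cleaner justification of the step the paper states directly as the ordered inequality $V_{u[k]}(P)\le V_{u[k]}(Q)$ in~\eqref{VineqPQ}; the substance is identical.
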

\begin{proof}
We consider two cache placements $P$ and $Q$ such that $P \subseteq Q \subseteq \mathcal{P}$ and we will prove that $F_u^*(P) \leq F_u^*(Q)$. Since  $P \subseteq Q$, the contents cached in $P$ are also available in $Q$ with the same or better streaming quality, {\it i.e.},
\begin{equation}\label{PQmonotonicity}
    s_u(P,i) \leq s_u(Q,i), \text{ for all } i\in \mathcal{K}. 
\end{equation}
This is easily proven by contradiction, assuming that there exist a content $\eta$ such that $s_u(P,\eta) > s_u(Q,\eta)$.

Next, by Definition \ref{def:Rvalue}, the following inequalities are true
\begin{eqnarray}
V_{ui}(P) &\leq& V_{ui}(Q), \\
V_{u[k]} (P) &\leq& V_{u[k]} (Q), \text{ for all } i,k\in \mathcal{K}. \label{VineqPQ}
\end{eqnarray}
Finally,   it follows by \eqref{F_u_formula} that $F_u^*(P) \leq F_u^*(Q)$.
\end{proof}

Next, we define the marginal gain of $F^*_u$ and we state an immediate consequence of Lemma \ref{monotone}.
\theoremstyle{plain} \newtheorem{monot_corollary}{Corollary} 
\begin{monot_corollary}[Marginal gain] \label{monot_corollary}
For a cache placement $P$, and a pair $(i,j)$ such that $(i,j) \not\in P$, we denote by  
\begin{equation*}
\Delta F_u^* (P,(i,j)):= F_u^*(P') - F_u^* (P), 
\end{equation*}
 where $P':=P \cup \{(i,j)\}$, the marginal gain of $F_u^*$ at $P$ with respect to $(i,j)$. Then, $ \Delta F_u^* \left( P,(i,j)\right) \geq 0$.\end{monot_corollary}

\theoremstyle{plain} \newtheorem{submodular}[rate_formula]{Lemma} 
\begin{submodular} \label{submodular} 
The set function $F_u^*$ is submodular\footnote{For definition, see \cite{Fisher1978}.}
for all $u\in \mathcal{U}$. 
\end{submodular}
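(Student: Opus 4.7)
The plan is to verify submodularity via the diminishing-returns inequality: for every $P\subseteq Q\subseteq\mathcal{P}$ and every $(i_0,j_0)\notin Q$, I would show that
\[\Delta F_u^*(P,(i_0,j_0))\ge \Delta F_u^*(Q,(i_0,j_0)).\]
Starting from \eqref{F*formula}, I would split $F_u^*(X)=g_u(X)+h_u(X)$, where $g_u(X):=\sum_{k=1}^{N_u}V_{u[k]}(X)$ is the top-$N_u$ V-value sum and $h_u(X):=(1-\alpha_u)\sum_{i\in\mathcal{K}}p_{ui}\,s_u(X,i)$. Since a sum of submodular set functions is submodular, it suffices to show that each piece is submodular.

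The submodularity of $h_u$ is the easy part. For each fixed $i$, $X\mapsto s_u(X,i)$ is a ``max-of-weights'' set function on $\mathcal{P}$: adding a pair $(i',j')$ affects it only if $i'=i$, and the marginal gain is then $[s_{uj'}-s_u(X,i)]^+$, which is non-increasing in $X$ by the componentwise monotonicity of $s_u(\cdot,i)$ already used in the proof of Lemma~\ref{monotone}. Hence each $s_u(\cdot,i)$ is submodular, and $h_u$ is a nonnegative linear combination of such functions.

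The hard part is $g_u$, because the top-$N_u$ order statistic is not manifestly submodular: trying to write $g_u(X)=\max_{S:|S|=N_u}\sum_{i\in S}V_{ui}(X)$ as a pointwise maximum of submodular functions does not help, since such pointwise maxima need not be submodular. My plan is a leave-one-out reformulation that isolates the unique coordinate $V_{ui_0}$ affected by adding $(i_0,j_0)$. Let $T^{-i_0}(X)$ denote the $N_u$-th largest element of $\{V_{ui}(X):i\ne i_0\}$ and $B^{-i_0}(X)$ the sum of the top $N_u-1$ elements of the same set; a brief case analysis on whether $V_{ui_0}(X)$ sits among the top $N_u$ gives
\[g_u(X)=B^{-i_0}(X)+\max\bigl(T^{-i_0}(X),\,V_{ui_0}(X)\bigr).\]
Neither $B^{-i_0}$ nor $T^{-i_0}$ depends on pairs of the form $(i_0,\cdot)$, and Lemma~\ref{rate_formula} gives $V_{ui_0}(X\cup\{(i_0,j_0)\})=\max(V_{ui_0}(X),v^*)$ with $v^*:=\tfrac{\alpha_u}{N_u}s_{uj_0}+\beta_u\varphi(r_{ui_0})$ independent of $X$, so a direct computation yields
\[\Delta g_u(X,(i_0,j_0))=\bigl[v^*-\max\bigl(T^{-i_0}(X),\,V_{ui_0}(X)\bigr)\bigr]^+.\]

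To conclude, both $V_{ui_0}$ and $T^{-i_0}$ are monotone non-decreasing set functions on $\mathcal{P}$ (the first inherits monotonicity from $s_u(\cdot,i_0)$; the second is the $N_u$-th order statistic of the monotone quantities $V_{ui}$ for $i\ne i_0$, and an order statistic of monotone quantities is monotone). Hence $\max(T^{-i_0}(X),V_{ui_0}(X))$ is non-decreasing in $X$, its positive-part complement above is non-increasing in $X$, and therefore $\Delta g_u(P,(i_0,j_0))\ge \Delta g_u(Q,(i_0,j_0))$ for $P\subseteq Q$. Combined with the submodularity of $h_u$, this establishes submodularity of $F_u^*$.
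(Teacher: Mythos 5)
Your proof is correct, but it takes a genuinely different route from the paper's. The paper proves Lemma~\ref{submodular} by first establishing $s_u(P\cup\{(i,j)\},i)=\max\{s_{uj},s_u(P,i)\}$ (Lemma~\ref{suiP}) and then deriving a closed-form expression for the \emph{full} marginal gain $\Delta F_u^*(P,(i,j))$, including the characterization that the gain is nonzero exactly when $s_{uj}>s_u(P,i)$ (Lemma~\ref{Delta_f_lemma}); submodularity is then obtained by a case analysis on the relative order of $s_{uj}$, $s_u(A,i)$, $s_u(B,i)$ and of $V_i(B')$ versus $V_{[N]}(B)$. You instead split $F_u^*$ via \eqref{F*formula} into the non-recommendation part $h_u$, a nonnegative combination of the max-type functions $s_u(\cdot,i)$ (submodular by the same monotonicity fact \eqref{PQmonotonicity} the paper uses), and the top-$N_u$ part $g_u$, which you handle through the leave-one-out identity $g_u(X)=B^{-i_0}(X)+\max\bigl(T^{-i_0}(X),V_{ui_0}(X)\bigr)$ (valid for any tie-breaking, given $|\mathcal{K}|>N_u$). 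Since adding $(i_0,j_0)$ leaves $B^{-i_0}$ and $T^{-i_0}$ unchanged and replaces $V_{ui_0}$ by $\max(V_{ui_0},v^*)$ --- this is exactly Lemma~\ref{suiP} shifted by the constant $\beta_u\varphi(r_{ui_0})$, so citing that lemma rather than Lemma~\ref{rate_formula} would be the cleaner reference --- the marginal gain collapses to $\bigl[v^*-\max\bigl(T^{-i_0}(X),V_{ui_0}(X)\bigr)\bigr]^+$, and submodularity of $g_u$ (hence of $F_u^*$) follows from monotonicity alone, since $T^{-i_0}$ is an order statistic of coordinatewise monotone quantities (the same fact as \eqref{VineqPQ}) and a positive part of a constant minus a monotone function is antitone. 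What each approach buys: yours is more modular and eliminates the multi-case comparison, reducing the whole argument to elementary properties of $\max$ and $[\cdot]^+$; the paper's route is heavier but produces the explicit marginal-gain formula of Lemma~\ref{Delta_f_lemma}, which has independent value --- it identifies exactly when caching a (content, cache) pair yields zero gain, which is what implementation speed-ups such as lazy evaluations of the greedy MoSE algorithm exploit.
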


We consider two placements $A$ and $B$ such that $A \subseteq B \subseteq \mathcal{P}$ and $(i,j)\in \mathcal{P}\setminus B.$ We need to prove that 
\begin{equation} \label{to_prove_submod}
\Delta F_u^*(A,(i,j)) \geq \Delta F_u^*(B,(i,j)).
\end{equation}
In other words, the marginal benefit of adding content $i$  to the cache $j$ in $A$ is greater than or equal to the marginal benefit in $B$. This means that the function $F_u^*$ has the diminishing returns property. 

In order to prove Lemma \ref{submodular}, we need a few intermediate results.  \emph{All the results and proofs given here are true for any $u\in \mathcal{U}$, so, for simplicity,  the index $u$ will be omitted  throughout the two following lemmas and the proof of Lemma~7.}

\theoremstyle{plain} \newtheorem{suiP}[rate_formula]{Lemma}
\begin{suiP} \label{suiP}
Given $P$, a placement of cached items,  and a pair $(i,j) \in (\mathcal{K} \times \{1,\ldots,C\})$ such that $(i,j) \not\in P$, we write $P':=P\cup (i,j)$. Then, $s(P',i)= max \{s_{j}, s(P,i) \}$.
\end{suiP}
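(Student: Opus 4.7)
The plan is to first re-express $s(P,i)$ (as defined through \eqref{highest_rate}) as a simple maximum over the caches that actually store $i$. Inspecting \eqref{highest_rate}, the factor $\prod_{l=1}^{j-1}(1-x_{i(l)})\,x_{i(j)}$ is the indicator that the $j$-th best cache in the user's $s$-ordering is the first one (from the top of the order) containing $i$. Hence at most one term of the outer sum is nonzero, and it contributes $s_{(j^\star)}$, where $j^\star$ is the rank of the highest-quality cache holding $i$ under $P$. This yields the clean characterization
\begin{equation}\label{eq:sPi-max}
s(P,i) \;=\; \max\{\,s_{\ell} \,:\, (i,\ell)\in P\,\},
\end{equation}
which is well defined because $(i,0)\in P$ for every $i$ (the large cache always stores the whole catalogue), so the set on the right-hand side is never empty.

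With \eqref{eq:sPi-max} in hand the lemma becomes a one-line consequence of a set identity. Since $P' = P\cup\{(i,j)\}$, the collection of caches storing $i$ under $P'$ is obtained from the one under $P$ by adjoining the single element $\ell=j$. Therefore
\begin{equation*}
s(P',i) \;=\; \max\big(\{\,s_{\ell}:(i,\ell)\in P\,\}\cup\{s_j\}\big) \;=\; \max\{\,s(P,i),\,s_j\,\},
\end{equation*}
which is precisely the claimed equality.

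The only step that requires any real care is the reformulation \eqref{eq:sPi-max}; everything else is immediate. I do not anticipate an obstacle there either, as it is a direct reading of the ordered-statistics setup in Definition~\ref{def:order_statistics} combined with the expression for $s_u(X,i)$ from Lemma~\ref{rate_formula}. Once \eqref{eq:sPi-max} is established, the lemma provides exactly the local update rule needed downstream to compare $s(P,i)$ before and after inserting a single (content, cache) pair, which will be the main ingredient for the submodularity bound \eqref{to_prove_submod} in Lemma~\ref{submodular}.
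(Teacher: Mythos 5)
Your proposal is correct and follows essentially the same route as the paper: the paper's proof also rests on the observation (made explicit in the proof of Lemma~\ref{rate_formula}) that $s(P,i)$ is the highest cache-specific quality among the caches storing $i$, and then argues the two cases $s_j > s(P,i)$ and $s_j \le s(P,i)$, which is exactly your max-over-an-augmented-set identity. Your explicit reformulation of \eqref{highest_rate} as a maximum is a slightly cleaner packaging of the same idea.
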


\begin{proof}
By~\eqref{highest_rate}, after adding $(i,j)$ to the placement, if $s_j$ is greater than $s(P,i)$, then $i$ will be retrieved from cache $j$, {\it i.e.}, $s(P',i)=s_j$. On the other hand, if cache $j$ does not offer a better quality for the user than before, the quality associated to $i$ will stay the same, {\it i.e.}, $s(P',i)=s(P,i)$.
\end{proof}

\theoremstyle{plain} \newtheorem{Delta_f_lemma}[rate_formula]{Lemma}
\begin{Delta_f_lemma} \label{Delta_f_lemma}
Given $P$, a placement of cached items, and a pair $(i,j) \in (\mathcal{K} \times \{1,\ldots,C\})$ such that $(i,j) \not\in P$, the following statements are true:
\begin{enumerate}

\item[a.] $\Delta F^* (P,(i,j))= 0$  if and only if  $s_{j} \leq s(P,i)$; 
\item[b.]$\Delta F^* (P,(i,j))> 0$ if and only if $s_{j} >  s(P,i)$. 
\end{enumerate}
In the second case, the marginal gain is equal to
\begin{eqnarray*}
\Delta F^*  (P,(i,j))&=& (1-\alpha)\;p_{i}\;(s_{j}-s(P,i))\\ &+&  \begin{cases}
    0,&\text{if }\; V_{i}(P')\leq V_{[N]}(P),  \\
    V_{i}(P')-M, &\text{if }\; V_{i}(P')>V_{[N]}(P),   
    \end{cases} 
\end{eqnarray*}
where $P'=P\cup (i,j)$ and  $M= \max\{V_{i}(P), V_{[N]}(P) \}$.

\end{Delta_f_lemma}

Essentially, Lemma \ref{Delta_f_lemma} states that adding $(i,j)$ to $P$ will lead to a positive marginal gain of $F^*$ if and only if the cache $j$ can provide a higher (cache-spec.) quality for the  user than any other cache where $i$ was already cached~(in $P$).

\begin{proof}
By Lemma \ref{y_optimal}, $\Delta F^* (P,(i,j))> 0$ if and only if 
\begin{eqnarray}
& & \sum_{k=1}^{N}\left( V_{[k]} (P')- V_{[k]} (P) \right) >0 \label{Svineq}\\
&\text{ or }& \sum_{k\in \mathcal{K}} (s(P',k)-s(P,k))>0. \label{Ssineq}
\end{eqnarray}
Given that the only difference between $P$ and $P'$ is the content $i$ in cache $j$, the qualities of the contents other than $i$ remain the same as the addition of $(i,j)$ does not affect them. As a result,  the inequality in \eqref{Ssineq} is true if and only if $s(P',i)>s(P,i)$. By Lemma \ref{suiP}, $s(P',i)=s_j$ and, therefore, \eqref{Ssineq}  is equivalent to $s_j > s(P,i)$. Next, when the inequality \eqref{Svineq} holds, $V_i(P')>V_i(P)$ because, otherwise, $V_{[k]}(P')=V_{[k]}(P)$, for every $k\in \mathcal{K}$. By \eqref{def:eq:V}, the inequality $V_i(P')>V_i(P)$ is equivalent to $s(P',i)=s_j>s(P,i)$.
Hence, we proved that the inequality in \eqref{Svineq} implies $s_j > s(P,i)$, which is equivalent to \eqref{Ssineq}. Therefore, statement (b) holds. Since $\Delta F^* (P,(i,j)) \geq 0$ (by Corollary \ref{monot_corollary}), it follows that $\Delta F^* (P,(i,j))= 0$  if and only if  $s_{j} \leq s(P,i).$

Next, we  calculate $\Delta F^* (P,(i,j))$  when  $s_{j} > s(P,i)$.  First, note that, by \eqref{F*formula}, the expression $F^*(P')-F^*(P)$ consists of two summands. The summand with coefficient $(1-\alpha)\, p_i$ is equal to
$ \sum_{k\in \mathcal{K}} (s(P',k)-s(P,k)) = s_{j} - s(P,i)$.
In order to calculate the other summand, we compare $V_i(P')$ with the V-values of the recommended items in $P$ \emph{before} adding $(i,j)$, {\it i.e.}, the values $V_{[k]}(P)$ for $k=1,\ldots, N$.

If $V_i(P')< V_{[N]}(P)$, content $i$ will not feature in the recommendations list after caching it in $j$. If $V_i(P')= V_{[N]}(P)$, then content $i$ may make it to the recommendations list by replacing the $[N]$-th item in the list, assuming that ties are broken arbitrarily in the selection process. In both cases, nothing changes in terms of the $[N]$ highest V-values in $P'$, which implies that $\sum_{k=1}^N \left( V_{[k]} (P')- V_{[k]} (P) \right) = 0$.

On the other hand, if  $V_i(P') > V_{[N]}(P)$, content $i$ will definitely feature in the recommendations list after adding it in $j$, which implies \eqref{Svineq}. 
We consider  two subcases:
\begin{itemize}
    \item $i$ was already among the recommendations in $P$ even before caching it in $j$, {\it i.e.}, $V_{i}(P) \geq V_{[N]}(P)$. In this case, since the streaming quality is better at $j$, a part of the marginal gain will come from the difference in V-value of $i$ before and after adding $(i,j)$. This means that $\sum_{k=1}^N \left( V_{[k]} (P')- V_{[k]} (P) \right) = V_{i}(P') - V_{i}(P)$.
    \item $i$ was not recommended before caching it in $j$, {\it i.e.}, $V_{i}(P) < V_{[N]}(P)$. Since $V_i(P')> V_{[N]}(P) >V_{i}(P)  $, content $i$ gets into the recommendations list by replacing the $N$-th recommendation. Hence, a part of the marginal gain will come from the difference of the new V-value of $i$ and the V-value of the $[N]$-th  item in $P$, {\it i.e.}, $\sum_{k=1}^N \left( V_{[k]} (P')- V_{[k]} (P) \right) = V_{i}(P') - V_{[N]}(P)$.
\end{itemize}
Then, the result follows by replacing the findings above in the expression $F^*(P')-F^*(P)$. 
\end{proof}

We can now prove Lemma \ref{submodular}.

\begin{proof}[Proof of Lemma \ref{submodular}]
For two placements $A$ and $B$ such that $A \subseteq B \subseteq \mathcal{P}$ and a pair $(i,j)\in \mathcal{P}\setminus B$, we need to prove \eqref{to_prove_submod}.
As before, $A'$ and $B'$ are the sets $A\cup (i,j)$ and $B\cup (i,j)$ respectively. Since $A\subseteq B$, eq. \eqref{PQmonotonicity} (Lemma \ref{monotone}) implies that 
\begin{equation} \label{ABmonotonicity}
  s(A,i) \leq s(B,i).
\end{equation}

In line with Lemma \ref{Delta_f_lemma}, we examine the following cases: {\it i)} $ \Delta F^*(A,(i,j))=0$; {\it ii)} $ \Delta F^*(A,(i,j))>0$.
The first case is equivalent to $s_{j} \leq s(A,i)$, by Lemma \ref{Delta_f_lemma}. 
Then, by \eqref{ABmonotonicity}, $s_{j} \leq s(B,i) $. We invoke once again Lemma \ref{Delta_f_lemma} and we get that $\Delta F^*(B,(i,j))= \Delta F^*(A,(i,j))=0$.

Concerning the second case, it is equivalent to $s_{j} > s(A,i)$ and $\Delta F^*(A,(i,j))$ is given by the formula in Lemma \ref{Delta_f_lemma}. We consider three subcases: 
\begin{itemize}
    \item $s_{j} \leq s(B,i) $;
    \item $s_{j} > s(B,i) $ and $V_i(B') \leq V_{[N]}(B)$;
     \item $s_{j} > s(B,i) $ and $V_i(B') > V_{[N]}(B)$.
\end{itemize}

In the first subcase, $ \Delta F^*(B,(i,j))=0$, by Lemma \ref{Delta_f_lemma} and, therefore, $\Delta F^*(A,(i,j)) > \Delta F^*(B,(i,j))=0$.

Next, $s_{j} > s(B,i) $ is equivalent to $\Delta F^*(B,(i,j))>0$.
Since $s_{j} > s(A,i) $ as well, it holds that
\begin{equation} \label{sAandsB}
    s_j= s(A',i)= s(B',i).
\end{equation}
If $V_i(B') \leq V_{[N]}(B)$, Lemma \ref{Delta_f_lemma},  \eqref{ABmonotonicity} and \eqref{sAandsB} imply that 
\begin{equation*}
\Delta F^*  (B,(i,j))= (1-\alpha)\, p_{i}\,(s_{j}-s(B,i)) \leq \Delta F^*  (A,(i,j)).
\end{equation*}
If $V_i(B') > V_{[N]}(B)$, by \eqref{ABmonotonicity}, \eqref{sAandsB} and \eqref{VineqPQ}, it follows that
\begin{equation} \label{last_one}
    V_i(A')=V_i(B')>V_{[N]}(B) \geq V_{[N]}(A).
\end{equation}
Combining this with \eqref{ABmonotonicity} and Lemma \ref{Delta_f_lemma}, in order to prove $\Delta F^*  (A,(i,j)) \geq  \Delta F^*  (B,(i,j))$, we only need to prove that
\begin{equation} \label{max:to:prove}
    \max\{V_{i}(A), V_{[N]}(A) \} \leq \max\{V_{i}(B), V_{[N]}(B) \}.
\end{equation}

It follows by \eqref{ABmonotonicity} that $V_i(A) \leq V_i(B)$, and therefore $ V_i(A) \leq \max\{V_{i}(B), V_{[N]}(B) \}$. Moreover, $ V_{[N]}(A) \leq \max\{V_{i}(B), V_{[N]}(B) \}$, by \eqref{last_one}.
We then obtain \eqref{max:to:prove} and this concludes the proof.
\end{proof}

\theoremstyle{plain} \newtheorem{submodular_linear_combi}[rate_formula]{Lemma} 
\begin{submodular_linear_combi} \label{submodular_linear_combi}
The set function $f^*$, as defined in \eqref{def_f*X}, 
is monotone increasing and submodular. 
\end{submodular_linear_combi}

\begin{proof}
By Lemma \ref{lemma_max_y}, $f^*(X)= \sum_{u\in \mathcal{U}} F_u^*(X)$. It is easy to prove that monotonicity and submodularity are preserved under non-negative linear combinations.
Therefore, the result is an immediate consequence of Lemmas \ref{monotone} and \ref{submodular}.
\end{proof}

\subsection{MoSE algorithms and guarantees} \label{sub:guarantees}

We managed to prove through the decomposition in \eqref{eq:decompos} that $f^*(X)$ is submodular. The theory on submodularity optimization suggests that different greedy algorithm variants give constant approximations for the outer problem, and thus for the MoSE problem. In fact, the factor of approximation depends on the constraints in~\eqref{matroid_constraint}.

\subsubsection{The case of equal-sized contents}

We define a greedy algorithm that we call the  \emph{MoSE algorithm}. This algorithm  starts with a placement $P$ consisting of empty caches (except for the large cache that contains the entire catalogue) and greedily fills one by one all the available shots. In every round of selection, it calculates the marginal gain of $f^*$ at $P$ with respect to at most $C \cdot |\mathcal{K}|$ elements, {\it i.e.}, pairs (content, cache), by solving the {I}nner {A}lgorithm (as subroutine). It then selects and adds to $P$ the element that maximizes the marginal gain (ties broken arbitrarily), before the next selection round begins. The algorithm is summarized in Algorithm 2.

\begin{algorithm}[h]
\NoCaptionOfAlgo

\SetAlgoLined
\SetKwInOut{Input}{Input}
\SetKw{Return}{Return}
\Input{$C$, $\{\mathcal{C}_j \}, \mathcal{U}$, $\mathcal{K}$, $\{N_u\}$, $\{s_{uj}\}, \{r_{ui}\}$, $\{\beta_u\}$, $\{\alpha_u\}$}

 Start with empty caches, {\it i.e.}, $P= \cup_{j=1}^C  P^{(j)}$, where  $P^{(j)}=\emptyset$, for all $j=1,\ldots,C $\\
 \textbf{Outer algorithm:}\\
 \While{caches are not full, {\it i.e.}, $|P^{(j)}| < \mathcal{C}_j$ for all $j$,}{
 \For{every (not full) cache $j=1,\ldots,C$,}{
  \For{every content $i\in \mathcal{K}$ s.t. $(i,j)\notin P^{(j)}$,}{
 Estimate $\Delta f^* \left( P, (i,j) \right)$ by calling \textbf{Inner Algorithm(X)}; 
 Store $\max \Delta f^* \left( P, (i,j) \right)$.
 }
  }
 $(\eta,\theta):= \text{ argmax}_{(i,j)} \Delta f^* \left( P, (i,j) \right)$.\\
Add $(\eta,\theta)$ to $P$, {\it i.e.},
$P^{(\theta)} \leftarrow P^{(\theta)}\cup (\eta,\theta)$. \\
 }

\Return{$X^* \leftrightarrow P, Y^*= f^*(X^*)$} 
 \caption{Algorithm 2: \textbf{MoSE algorithm (equal-sized contents)}}
\end{algorithm}

 \theoremstyle{definition} \newtheorem{main_result}{Theorem}
\begin{main_result}[Homogeneous sizes] \label{main_result}
  If we let $OPT$ denote the optimal objective function value of the MoSE problem with equal-sized contents, and $(X^*, Y^*)$ denote the feasible solution given by the MoSE algorithm, then \begin{equation*}
f(X^*,Y^*) \geq \frac{1}{2}  OPT.
\end{equation*}
\end{main_result}
\begin{proof}
Since the constraints in \eqref{matroid_constraint} are matroid constraints, as in \cite{femto_JOURNAL2013}, the theory on submodular maximization~\cite{Fisher1978} suggests that
 a $1/2$-approximation is achievable by the above greedy algorithm.
\end{proof}

\subsubsection{The general case of contents of heterogeneous sizes} \label{subsub_size}

The fundamental difference between the two cases is the capacity constraints.  The constraints in \eqref{matroid_constraint} in the general case are knapsack constraints. However, the MoSE algorithm is oblivious of the content's size. The following algorithm  is an adaptation of the MoSE algorithm that takes size into account. More precisely, in every round of selection, it adds to the cache   the element (content, cache) that maximizes the ratio of marginal gain to the content's size, while satisfying the constraints in \eqref{matroid_constraint}. It is summarized in Algorithm 3.

\begin{algorithm}[h]
\NoCaptionOfAlgo

\SetAlgoLined
\SetKwInOut{Input}{Input}
\SetKw{Return}{Return}
\Input{Same as in MoSE alg. and $\{\sigma_i\}$} 

 Start with $P= \cup_{j=1}^C  P^{(j)}$, where  $P^{(j)}=\emptyset$, for all $j$\;
  \textbf{Outer algorithm:}\\
  \While{caches are not full, {\it i.e.},$\sum_{k\in P^{(j)}} \sigma_k < \mathcal{C}_j$,}{
 \For{every (not full) cache $j=1,\ldots,C$,}{
 \For{every content $i\in \mathcal{K}$ such that $(i,j)\notin P^{(j)}$ and $\sigma_i \leq \mathcal{C}_j-\sum_{k\in P^{(j)}}\sigma_k$,}{
 Estimate $\delta  f^* ( P, (i,j)) := \frac{ \Delta f^* \left( P, (i,j) \right)}{\sigma_i}$\\
 by calling \textbf{Inner Algorithm}(X); \\
 Store $\max_{(i,j)} \delta f^* \left( P, (i,j) \right)$.
 }
  }
$(\eta ,\theta):= \text{ argmax}_{(i,j)} \delta f^* \left( P, (i,j) \right)$. Add it to $P$.\\

 }
\Return{$X^* \leftrightarrow P, Y^*= f^*(X^*)$} 
 \caption{Algorithm 3: \textbf{s-MoSE algorithm (size-aware)}}
\end{algorithm}

 \theoremstyle{definition} \newtheorem{gen_result}[main_result]{Theorem}
\begin{gen_result}[Heterogeneous sizes]
If we let $OPT_s$ denote the optimal objective function value of the MoSE problem in the general case (contents of heterogeneous sizes), and $(X^*, Y^*)$, $(X_s, Y_s)$ denote the feasible solutions given by the MoSE and s-MoSE algorithms respectively, then
\begin{equation*}
    \max \{ f(X^*, Y^*), f(X_s,Y_s)\} \geq \frac{1-1/e}{2} OPT_s.
\end{equation*}
\end{gen_result}

\begin{proof}
In the case of variable-sized contents, both MoSE and s-MoSE algorithms can perform arbitrarily badly~\cite{leskovec-max-submodular-knapsack}. According to the result in~\cite{leskovec-max-submodular-knapsack}, it suffices to choose the maximum objective function value achieved by the two algorithms in order to achieve a $\frac{1-1/e}{2}$-approximation. 
\end{proof}


\subsubsection{Complexity, implementation speed-ups and distributed techniques} \label{subsec:complexity}
It is easy to see that the complexity of both the MoSE and the s-MoSE algorithms is the same. The algorithms need to run  at most $\sum_{j=1}^{j=C} |\mathcal{C}_j |$ times in order to fill all caches. At each iteration, they evaluate the marginal gain of at most $C \cdot|\mathcal{K}|$ pairs (content, cache).  For every evaluation, they call the \emph{Inner Algorithm} of complexity $O (|\mathcal{U}| \cdot  |\mathcal{K}|)$ and complete $|\mathcal{U}|$ calculations that concern the non-recommendation part of the objective function. 
Therefore, the total complexity of the MoSE and s-MoSE algorithms is
$ O ( |\mathcal{U}| \cdot  |\mathcal{K}|^2 \cdot C \cdot \sum_{j=1}^{j=C} | \mathcal{C}_j |)$.

Implementation-wise, there is a way to speed up both algorithms by using the so-called  lazy evaluations method~\cite{leskovec-max-submodular-knapsack}. This method takes advantage of the monotonicity and submodularity of the objective function in order to avoid unnecessary calculations in the selection process of the caching placement. Recent  works propose methods for further acceleration, \emph{e.g.,} randomized greedy algorithm in~\cite{mirzasoleiman2015lazier}.

Finally, we note that there is a technique suggested in the literature for distributed/multi-processor implementations of greedy algorithms of submodular maximization~\cite{mirzasoleiman2016distributed}, such as the MoSE algorithm. More precisely, for a set of $m$ processors/nodes, this technique starts by  partitioning the ground set, \emph{i.e.,} the catalogue of contents, into $m$ subsets. Then each processor solves \emph{in parallel} the MoSE problem only on one of the subsets by applying our proposed policy. This leads to $m$ solutions~(caching allocation and users' recommendations) $DS_1,\ldots, DS_m$. Next, we define a new subset of the catalogue by merging the contents for which the caching variable was equal to $1$ in at least one of the previous solutions. We then run on this subset the MoSE algorithm which gives the solution $MS$. Finally, among the solutions $DS_1,\ldots, DS_m$ and $MS$, the one with the largest value of the objective function is selected. We note that, under some conditions, this implementation  offers approximation guarantees that depend on the guarantees of the centralized algorithm and on the number $m$. For more details on the algorithm and these approximation guarantees  we refer the reader to~\cite{mirzasoleiman2016distributed}. We will call \emph{m-DMoSE} the algorithm described above, where $m$ is the number of processors.
 
\subsection{The single-cache case $(C=1)$} \label{caseC1}

We study now the case where $C=1$, {\it i.e.}, apart from the large cache $C_0$, there is only one cache.  We prove that, in this case,  the  MoSE problem can be transformed  into an Integer Linear Program~(ILP) problem and, thus, common optimization methods can be applied to find the optimal solution  for small problem's instances. This will be useful in the next section since it will allow us to compare the performance of our algorithm with the optimal joint policy.


We introduce the variable $\{z_{ui}\}_{i,u}$ such that $z_{ui}=x_i y_{ui}$. The objective of the MoSE problem in \eqref{objective} becomes
\begin{eqnarray}
g(X,Y,Z)= \sum_{u\in \mathcal{U}} \sum_{i \in \mathcal{K}} \big[ \dfrac{\alpha_u}{N} \left( \left( s_{u1}-s_{u0} \right) z_{ui} + s_{u0} y_{ui} \right) \nonumber \\ 
+ \left(1-\alpha_u \right) p_{ui} \left( ( s_{u1} - s_{u0} )x_i + s_{u0} \right) + \beta_u y_{ui} \varphi(r_{ui})  \big].
\end{eqnarray}
Therefore, the MoSE problem for $C=1$ is equivalent to:

\theoremstyle{definition} \newtheorem*{zproblem}{Z Problem}
\begin{zproblem} 
\begin{eqnarray}
&\underset{X,Y,Z}{\text{maximize  }}& g(X,Y,Z) \label{obj_g}\\
&\text{subject to } &  \eqref{matroid_constraint}, \eqref{N_recomm},  \nonumber\\
&&  z_{ui}=x_i y_{ui}; \label{z-x-y-nonlinear-constr} \\
&& x_{ij}, y_{ui}, z_{ui} \in \{0,1\}. \label{binary_xyz}
\end{eqnarray}
\end{zproblem}

The equivalence comes from the fact that a pair $(\tilde{X}, \tilde{Y})$, where $\tilde{X}= \{\tilde{x}\}_i$ and $\tilde{Y}=\{\tilde{y}\}_{u,i}$, is optimal for the MoSE problem if and only if $(\tilde{X}, \tilde{Y}, \tilde{Z})$ where $\tilde{Z}= \{ \tilde{z} \}_{ui}$ such that $\tilde{z}_{ui}= \tilde{x}_i \tilde{y}_{ui}$ is optimal for the Z problem.

Notice that, although $g(X,Y,Z)$ is linear in the variables $X,Y$ and $Z$, the constraints  \eqref{z-x-y-nonlinear-constr} are nonlinear. However, we will prove that these constraints can be replaced by the following inequalities:
\begin{eqnarray}
z_{ui}&\leq & x_i, \label{inequality1-z}\\
z_{ui} &\leq & y_{ui}, \text{ for all } u\in \mathcal{U}, i\in \mathcal{K}.  \label{inequality2-z}
\end{eqnarray} 

\theoremstyle{plain} \newtheorem{equivalence_ILP}[rate_formula]{Lemma} 
\begin{equivalence_ILP} \label{equivalence_ILP}
The MoSE problem for $C=1$ is equivalent to the following ILP problem:
\end{equivalence_ILP}

\theoremstyle{definition} \newtheorem*{QoE_ILP}{MoSE ILP problem}
\begin{QoE_ILP} 
\begin{eqnarray*}
&\underset{X,Y,Z}{\text{maximize  }}& g(X,Y,Z) \\
&\text{subject to }& \eqref{matroid_constraint}, \eqref{N_recomm}, \eqref{binary_xyz} - \eqref{inequality2-z}.
\end{eqnarray*}
\end{QoE_ILP}

\begin{proof}
It suffices to prove that a solution for the Z problem is also a solution for the MoSE ILP problem and the inverse. 
Let us assume that $(\bar{X}, \bar{Y}, \bar{Z})$ is a solution for the Z problem. Since $\bar{X}$ and $\bar{Y}$ are binary variables, the expression $\bar{z}_{ui}= \bar{x}_i \bar{y}_{ui}$ implies the inequalities \eqref{inequality1-z} and \eqref{inequality2-z}. Hence, $(\bar{X}, \bar{Y}, \bar{Z})$ is also a solution for the MoSE ILP problem.

Inversely, let us assume that $(\tilde{X}, \tilde{Y}, \tilde{Z})$ is a solution for the MoSE ILP problem. It suffices to prove that $\tilde{z}_{ui}= \tilde{x}_i \tilde{y}_{ui}$ for every $u\in \mathcal{U}$ and $i\in \mathcal{K}$.
For the $u$ and $i$ such that $\tilde{x}_i=0$ or $\tilde{y}_{ui}=0$, the inequality constraints imply that $\tilde{z}_{ui}=0$. For the $u$ and $i$ such that $\tilde{x}_i=1$ and $\tilde{y}_{ui}=1$, we will necessarily have that $\tilde{z}_{ui}=1$ since the coefficient of $z_{ui}$ in the objective function $g$ is strictly positive in a maximization problem. Hence, considering that all variables are binary, it follows that  $\tilde{z}_{ui}= \tilde{x}_i \tilde{y}_{ui}$, and this concludes the proof.
\end{proof}

\section{Performance Evaluation}
\label{sec:perf}
In this section, we  validate the theoretical approximation guarantees of the proposed policy (\emph{MoSE algorithm}) and we compare it with other policies in a variety of scenarios.
\subsection{Scenario 1}
As a first step, we compare the performance of the MoSE algorithm with its distributed implementations 2-DMoSE and 4-DMoSE~(\emph{i.e.,} in 2 and 4 processors, see Sec.~\ref{subsec:complexity}) and with the optimal policy~(oracle). We consider a scenario with a single  cache and the large cache $C_0$ that contains the entire catalogue. As shown in Sec.~\ref{caseC1}, the MoSE problem for $C=1$ can be transformed into an ILP problem.  Therefore, in order to find the optimal policy~(oracle), we use the standard MATLAB solver which employs methods such as branch-and-bound, cutting-plane method or exhaustive search.

We consider $20$ users connected to the cache and a catalogue of $200$ unit-sized contents. We assume that the cache can fit $15$ contents and every user receives $N=2$ recommendations. The small size of the scenario is necessary to be able to calculate the optimal objective value. We will consider much larger scenarios subsequently.  Moreover, the impact of the recommendations is determined by $\alpha_u$, whose values follow a uniform distribution between $0.7$ and $0.9$ (in line with the statistics gathered on Netflix \cite{gomez2016netflix}).
In this scenario, we consider a synthetic dataset for the utilities $r_{ui}$ and the popularities $p_{ui}$. We chose $p_{ui}$ such that the aggregate content popularities over all users, {\it i.e.}, $\sum_u p_{ui}$, follow a Zipf distribution (with parameter $0.6$). 
Then, $r_{ui}$ are chosen randomly in $[0,1]$ such that their normalized value, {\it i.e.}, $r_{ui}/\sum_k r_{uk}$, are equal to  $p_{ui}$, for every $i\in \mathcal{K}$, as in \eqref{popularities_r}.

In this scenario, we  measure the SQ as cache hits with the values $s_{uj}$ as in~\eqref{psi_function_hit}, and the  RQ~(Def. \ref{def_quality}) by considering $\varphi(r_{ui})=\log(r_{ui})$. For a variety of values  of $\beta_u=\beta>0$, we queried the oracle and we calculated the MoSE given by the proposed algorithm and its distributed implementations. For some of the values $\beta$, Table~\ref{table_approx} shows the approximation ratio achieved and Table~\ref{table_execution} shows the average execution time per instance of the problem.

As we saw in Sec.~\ref{sub:guarantees}, the ratio $f(X^*,Y^*)/ OPT$ cannot be lower than $1/2$.  We observe that, in practice, the achieved ratio is much higher than $1/2$, as is also observed for other submodular problems, {\it e.g.}, in~\cite{dehghan2016complexity}. In fact, among all the different values of $\beta$ we considered~(30 in total), the lowest observed approximation ratio was equal to $0.9757$. Moreover, the approximation ratios achieved by the distributed algorithms 2-DMoSE and 4-DMoSE are also close to $1$.
However, as expected by the discussion in Sec.~\ref{subsec:complexity}, they do not perform as well as the (centralized) MoSE algorithm for some instances of the problem~(\emph{e.g.,} for $\beta=3.2$).

\theoremstyle{definition} \newtheorem{obs_optimal}[]{Observation} 
\begin{obs_optimal} 
Our numerical results validate the theoretical approximation guarantees  of our policy and also suggest a much better approximation ratio in practice.
\end{obs_optimal}

\begin{table}[!th] 
\centering 

\caption{Approximation ratio $(f(X^*,Y^*)/OPT)$} 
\begin{tabular}{|c|c|c|c|c|}
\hline
\textbf{Parameter $\beta$} & $0.01$ & $1$ & $1.7$ & $3.2$  \\
\hline
\textbf{Approx. ratio for MoSE${}^*$} & $1$ & $0.9977$ & $0.9979$ & $1$ \\
\hline
\textbf{Approx. ratio for 2-DMoSE} & $0.9998$ & $0.9935$ & $0.9979$ & $0.9818$  \\
\hline
\textbf{Approx. ratio for 4-DMoSE} & $0.9998$ & $0.9750$ & $0.9979$ & $0.8836$  \\
\hline
\multicolumn{5}{l}{ \small{${}^*$}\scriptsize{theoretical lower bound: $0.5$ (see Theorem~\ref{main_result})}}  
\end{tabular} \label{table_approx}

\end{table}

\begin{table}[!th] 
\centering 
\caption{Execution Time (AVG) Per Policy } 
\begin{tabular}{|c|c|c|c|}
\hline
\textbf{MoSE} & \textbf{2-DMoSE} &  \textbf{4-DMoSE} &\textbf{Oracle} \\
\hline
0.0221 sec. & 0.0147 sec. & 0.0111 sec. & 120.195 sec. \\
\hline
\end{tabular} \label{table_execution}
\end{table}

Regarding Table~\ref{table_execution}, we see that the average execution time of the MoSE algorithm is of much lesser magnitude than the oracle's one. We note that we implemented MoSE with the  lazy evaluations technique which avoids unnecessary calculations~(see Sec.~\ref{subsec:complexity}).
When MoSE is implemented in 2 and 4 processors, the execution time decreases significantly, and we observe a 2x speedup.

\theoremstyle{definition} \newtheorem{obs_execution}[obs_optimal]{Observation} 
\begin{obs_execution}
Implementing the MoSE algorithm leads to significant savings in execution time when compared with the oracle. These savings can be further pronounced in the case of a distributed/multi-processor implementation.
\end{obs_execution}

Next, we investigate if this close-to-optimal performance is reflected in the SQ-RQ tradeoffs. At the same time, we will compare these tradeoffs with the ones achieved by a proposed heuristic in the literature for a similar problem~\cite{chatzieleftheriou2019joint-journal}. 

\theoremstyle{remark} \newtheorem*{CAwR}{Cache-aware recommendations (CAwR)}
\begin{CAwR} 
CAwR~\cite{chatzieleftheriou2019joint-journal} makes caching and recommendation decisions at every cache independently. 
It decomposes the problem into the caching and recommendation steps.
First, given the content preference distribution for every user (equivalent to the content popularity distribution $p_{ui}$ or content utilities $r_{ui}$ of our model) and the weight every user gives to recommendations (the $\alpha_u$ of our model), the aggregate request probability of every content is calculated. Then, the $N$ items with the highest probability are cached. Note that, in the case of variable-sized contents, the cache allocation decisions are made by solving a $0-1$ knapsack problem, where the ``value" of every content is the aforementioned probability and the ``weight'' is its size. Then, in the recommendation step, the recommendations are made partially by cached contents and by non-cached contents that are of high utility for the particular user. The balance between cached and non-cached contents is determined by a so-called distortion parameter $r_d\in [0,1)$, which is similar to the parameter $\beta$ of our model.
\end{CAwR}

\begin{figure}[htp]
\centering
\includegraphics[width=6cm, trim={0.3cm 0cm 0.43cm 0.4cm},clip]{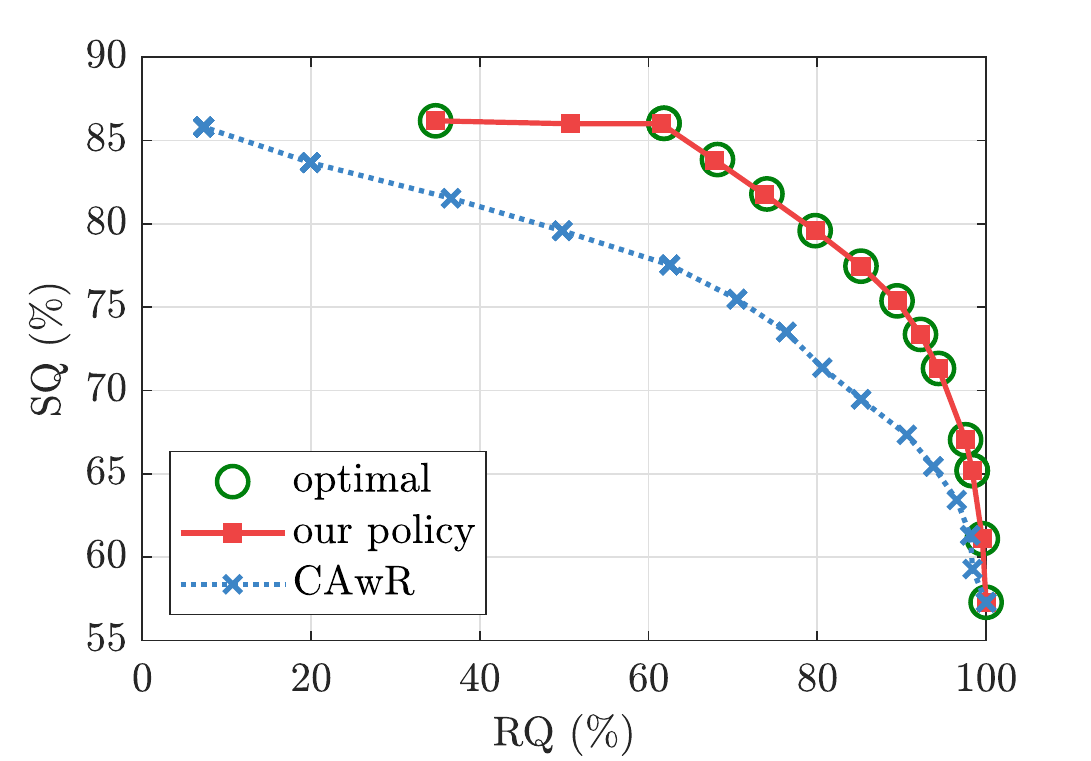}

\caption{Scenario 1, SQ-RQ tradeoff points for some values of the parameters $\beta$ and $r_d$.}
\label{tradeoff_synth_optimal}
\end{figure}

\begin{figure*}[t]
\centering
\centering
\subfloat[Equal-sized contents]{\includegraphics[width=6cm,trim={0.12cm 0cm 0.55cm 0.1 cm},clip]{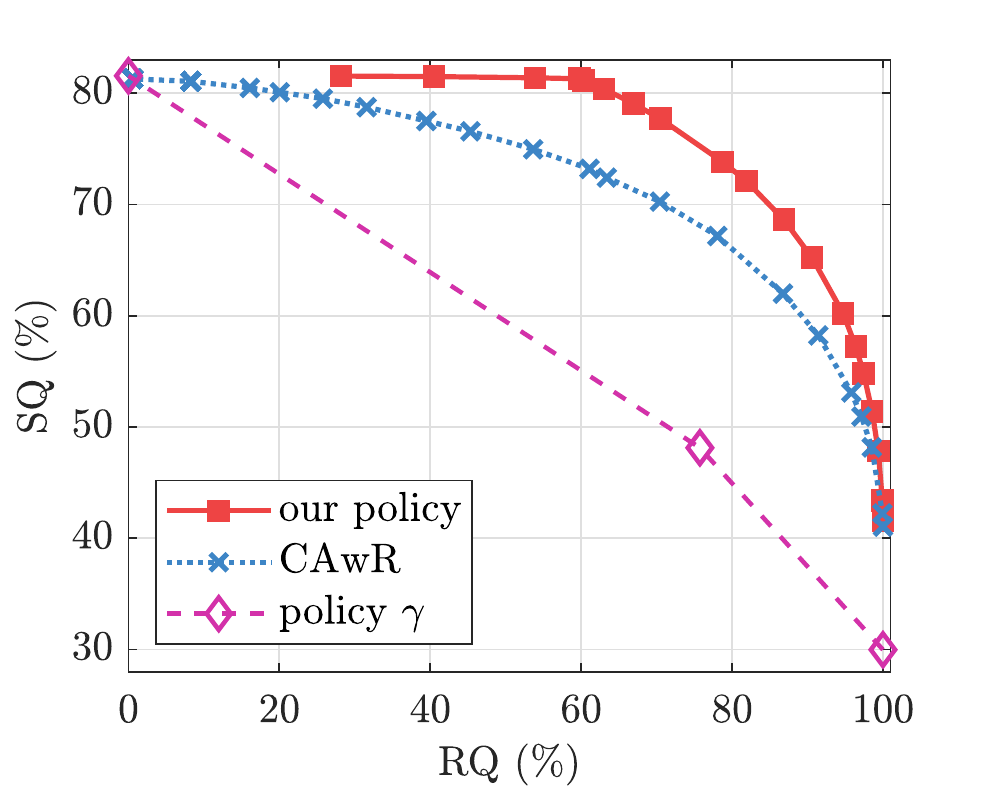}}
\hfil
\subfloat[Variable-sized contents]{\includegraphics[width=6cm,trim={0.12cm 0cm 0.55cm 0.1 cm},clip]{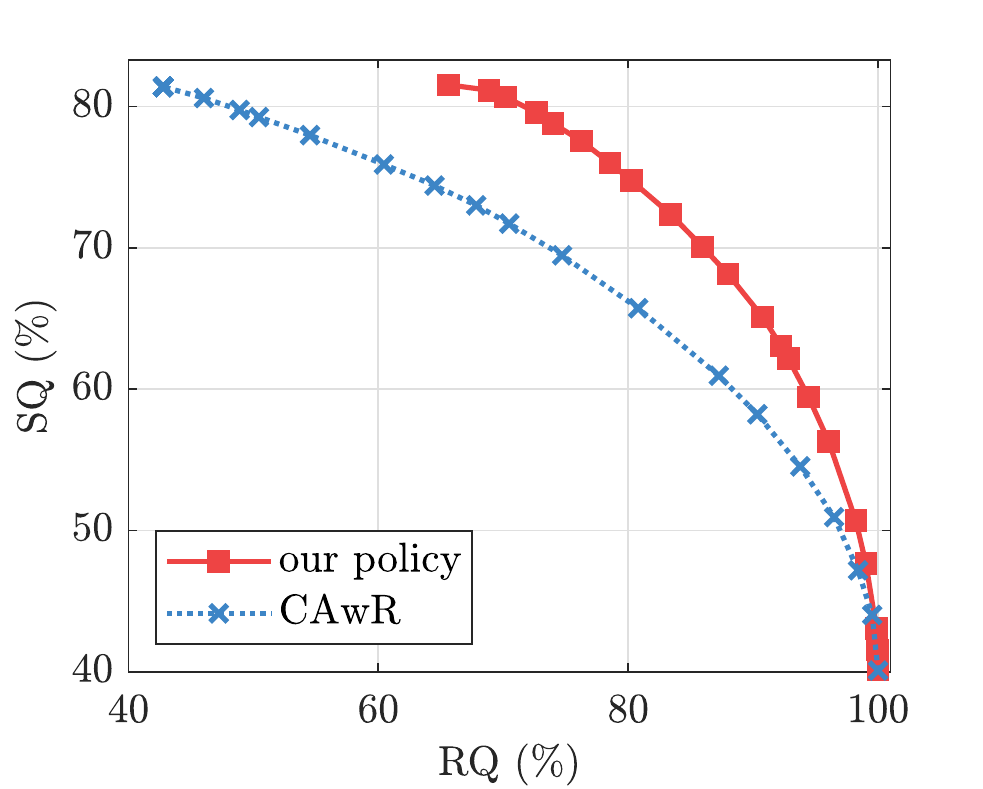}} 
\subfloat[Equal-sized contents, $r_{min}=0$ and $0.6$]{\includegraphics[width=6cm,trim={0.12cm 0cm 0.55cm 0.1 cm},clip]{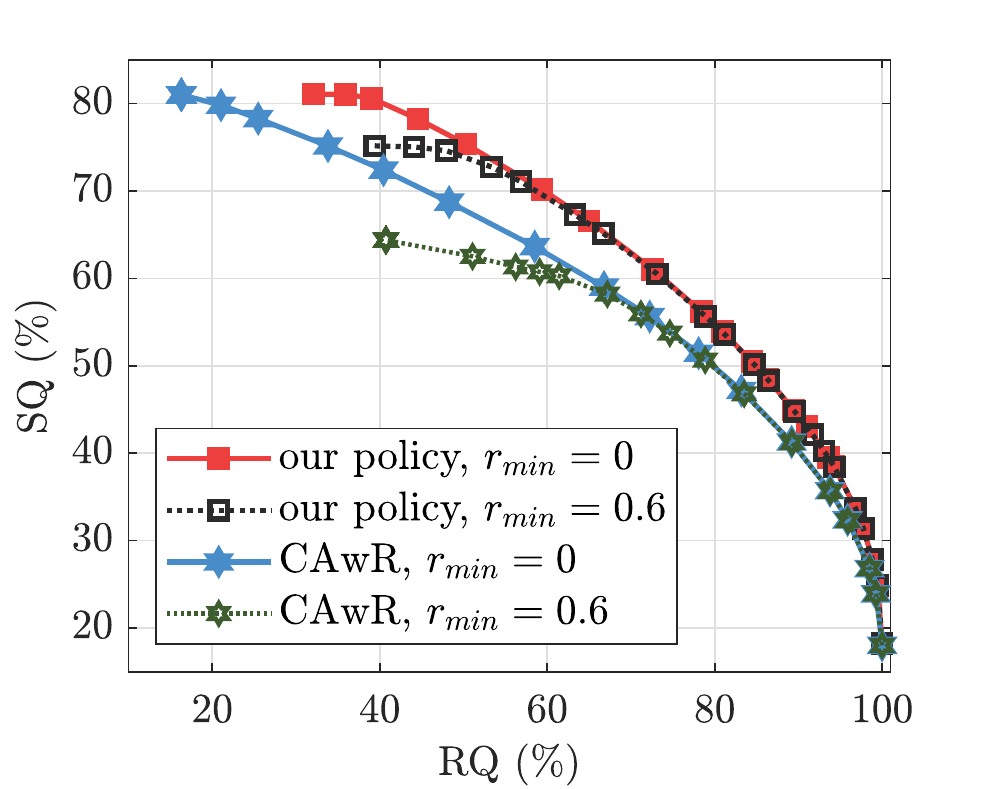}} 
\caption{Scenario 2, SQ-RQ tradeoff points. Comparison of our policy with the  policy CAwR proposed in~\cite{chatzieleftheriou2019joint-journal} and the baseline policy $\gamma$.} \label{trio1}
\end{figure*}

Figure \ref{tradeoff_synth_optimal} depicts the SQ-RQ tradeoffs given by the oracle, our policy, and CAwR as points in the plane. We obtained these tradeoffs for 30 different values of $\beta_u=\beta$ in the range $[0.01, 70]$ and the distortion parameter $r_d$. 
 The RQ values (x-axis) are normalized with respect to the two ``extreme'' policies A and C (defined in Section \ref{subsec:joint}). For example, $\text{RQ}=50\%$ implies that the  RQ value lies in the middle of the interval $[R_A, R_C]$, where $R_A$ and $R_C$ are the RQ values achieved by policies A and C respectively. Moreover, since $s_{uj}$ are as in \eqref{psi_function_hit}, the normalized SQ values~(y-axis) give the cache hit rate. 
 
 We remind the reader that each of these points corresponds to a different objective tradeoff, between SQ and RQ, that a CP might have, {\it i.e.}, these curves could also be interpreted as Pareto curves. As we discussed in Sec.~\ref{subsec:QoE}, $\beta$ captures the weight we attach to the RQ compared to the SQ. For a small value of $\beta$~(on the left), caching decisions are made based on the aggregate~(over all users) interest in contents and recommendations concern mostly cached items. This leads to high SQ/cache hits but compromised RQ.  As $\beta$ increases, we trade off a better RQ for a worse SQ. In fact, a better RQ would imply recommendations to each user that are close to her tastes and it is $\beta$ that  determines how close. As these tastes can differ from one user to the other and the cache capacity is limited, caches cannot store all the different recommended contents and this leads to decreased SQ/cache hits.

\theoremstyle{definition} \newtheorem{obs_Lsynth}[obs_optimal]{Observation} 
\begin{obs_Lsynth} 
Our policy's tradeoff curve almost coincides with the optimal. Furthermore, it dominates the tradeoff curve of CAwR, {\it i.e.,} our policy outperforms CAwR in terms of at least SQ or RQ (or both).
\end{obs_Lsynth}

For example, for a desired value of SQ of around $84\%$, CaWR achieves $20\%$ RQ and our policy $68\%$. More importantly, most of the tradeoffs of our policy ({\it e.g.}, around $80-95\%$ RQ and $70-80\%$ SQ) are not achievable by any tuning of the CAwR algorithm.  Finally, we observe that for large  $\beta$ and small $r_d$~(points in the extreme right) the recommendation and caching decisions of the two policies coincide. In fact, both policies recommend to each user the contents with the highest utility for the user and both policies store the contents with the highest aggregate probability to be requested~(given the aforementioned recommendations).

\subsection{Scenario 2}
We proceed with simulating larger scenarios. For this, we consider a single cache with $100$ or $200$ connected users and a catalogue consisting of $6000$ or $10000$  contents\footnote{Note that according to \cite{netflix_museum}, the total number of titles (movies and TV shows) available on Netflix in the USA is equal to $5848$.}. We consider realistic values (according to footnote \ref{footnote-cache-capacity}, p.~\pageref{footnote-cache-capacity}) for cache capacity varying from $1\%$ to $2.3 \%$ of the entire catalogue. The probabilities $\alpha_u$ are chosen randomly in $[0.7,0.9]$, in line with the statistics gathered on Netflix \cite{gomez2016netflix}, and $N$ varying from $2$ to $10$. For these experiments, we use a real dataset for the matrix of utilities $r_{ui}$:

\theoremstyle{remark} \newtheorem*{movie-data}{MovieLens dataset}
\begin{movie-data} 
The MovieLens dataset \cite{movielens-related-dataset} is a collection of $5$-star movie ratings collected on MovieLens, an online movie recommendation service. This dataset has also been  used  in  related works on caching and recommendations, {\it e.g.}, in~\cite{sermpezis2018sch-jsac}. Here, we used a variety of subsets of the total $20000263$ ratings available in the original dataset. It is commonly assumed that the utility of a content for a user is the \emph{predicted} rating of this user for the content \cite{amatriain_building}. Therefore, we interpret the rating as the content utility. Since the range of ratings is $0.5-5$ with $0.5$ increments, we map every rating $r$ to a random number in the interval $(r/5-0.1, r/5]$.
As is common, this matrix is quite sparse. To obtain the missing  ratings, we perform matrix completion through the TFOCS software \cite{TFOCS-paper}. TFOCS performs nuclear norm minimization in order to find the missing entries of a low-rank matrix.
\end{movie-data}

\subsubsection{Equal-sized contents}
We assume that the contents are of unit size. We will show the performance improvement achieved by our policy over a baseline scheme,  policy $\gamma$, and the earlier introduced CAwR. To begin, we define policy $\gamma$:

\theoremstyle{remark} \newtheorem*{policy-gamma}{Baseline policy $\gamma$}
\begin{policy-gamma} 
 It is a generalization of policies A and C. Policy $\gamma$ caches the most popular contents and then recommends a combination of cached contents and contents with high utility per user depending on the parameter $\gamma$. More specifically, it recommends $\lceil \gamma \cdot N \rceil$ cached contents, where $\lceil \cdot \rceil$ denotes the ceiling function, while the rest of the recommendations are the contents with the highest utility per user. For $\gamma=0$, policy $\gamma$ coincides with policy C, and, for $\gamma=1$, it coincides with policy A. 
\end{policy-gamma}

As before,  we  measure the SQ as cache hits and the  RQ as $\sum_i \log(r_{ui})$. In Figure \ref{trio1}(a), we plot the tradeoffs achieved by policy $\gamma$, CAwR, and our policy for different values of the parameters $\gamma$, $r_d$, and $\beta$  respectively. In this instance, $N=2$, which results in $3$ possible objective values for policy $\gamma$. In fact, one point corresponds to recommending $2$ cached items, the next one to recommending one among the cached items and the one with the highest utility, and the last  point to recommending the $2$ first contents ranked in terms of utility. 
\theoremstyle{definition} \newtheorem{obs_Lmovielens}[obs_optimal]{Observation} 
\begin{obs_Lmovielens} 
The SQ-RQ tradeoff curve of our policy dominates that of CAwR and that of the baseline policy $\gamma$ in large, realistic scenarios, driven by real datasets.
\end{obs_Lmovielens}

We notice, for example, that, in terms of SQ, there is a relative improvement of up to $10\%$ with respect to CAwR and of up to $54\%$ with respect to policy $\gamma$, while the improvement is much larger in terms of RQ. This is an encouraging finding that suggests that the theoretical gains could also be experienced in practice. Finally, note that the performance gain of our policy over  policy $\gamma$ is mainly
due to the joint decisions on caching and recommendations that our policy makes.

\subsubsection{Contents with heterogeneous sizes} So far, we have considered scenarios with equal-sized content ({\it e.g.}, chunks), as is often assumed in related work~\cite{femto_JOURNAL2013}. Here, we turn our attention to a scenario with contents of heterogeneous size, as analyzed in Section \ref{subsub_size}. The sizes of the contents were chosen in  $\{1,15\}$ and,  according to the findings of~\cite{workload_youtube} on YouTube videos, $90\%$ of the contents have a size of at most $2$ size units, while only $0.1\%$ have a size over $10$ size units. We adjust the cache capacity to $2.3\%$ of the total size of the catalogue  $\sum_{i\in \mathcal{K}} s_i$.
Figure \ref{trio1}(b) depicts the tradeoffs achieved by the two policies. In this context, our policy runs both the MoSE and s-MoSE algorithms and selects the maximum achieved objective function value between the two, as explained in Section~\ref{sub:guarantees}.  As expected, the difference between the tradeoff curves is similar to the one in Fig. \ref{trio1}(a). More specifically, we observe a relative gain of up to $63\%$ in RQ  and up to $15\%$ in SQ of our policy with respect to CAwR.

\theoremstyle{definition} \newtheorem{obs_size}[obs_optimal]{Observation} 
\begin{obs_size} 
Heterogeneous content sizes do not have an impact on the performance gains of our policy which, in this context, still outperforms existing schemes.
\end{obs_size}

\subsubsection{RQ-related constraints}
While the previous results are  promising, one might argue that the proposed policy could still recommend some rather unrelated contents, {\it i.e.,} contents of utility $r_{ui}$ close to $0$,  
in favor of a higher objective value, or worse, that some users might receive much better recommendations, {\it i.e.,} tailored to their tastes, than others. 
For this reason, we will  evaluate the performance of our policy and that of existing schemes when additional constraints on RQ are added to the problem. In particular, we measure RQ by considering $\varphi$ as in~\eqref{phi_rmin}. This leads to recommendations of contents whose utility per user is at least $r_{min}$. Since the recommendation decisions are made by solving the ``inner problem '' (as explained in Section \ref{effic_algor}), the caching decisions  also take into account this constraint.
Subsequently, we adjust CAwR such that, at the recommendation step, the contents  with $r_{ui}<r_{min}$ cannot be recommended to the user.

\begin{figure*}[t]
\centering
\centering
\subfloat[ $\psi$ linear, $\varphi$ logarithmic]{\includegraphics[width=6cm,trim={0.07cm 0cm 0.55cm 0.1 cm},clip]{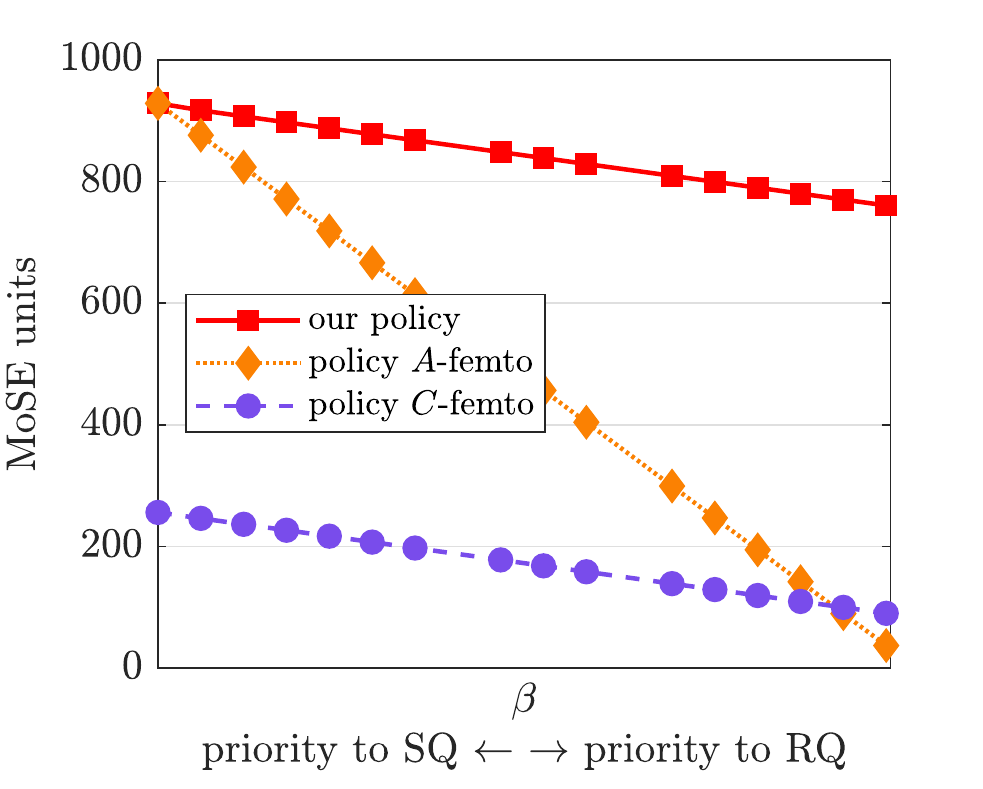}}
\hfil
\subfloat[$\psi$ and $\varphi$ linear]{\includegraphics[width=6cm,trim={0.07cm 0cm 0.55cm 0.1 cm},clip]{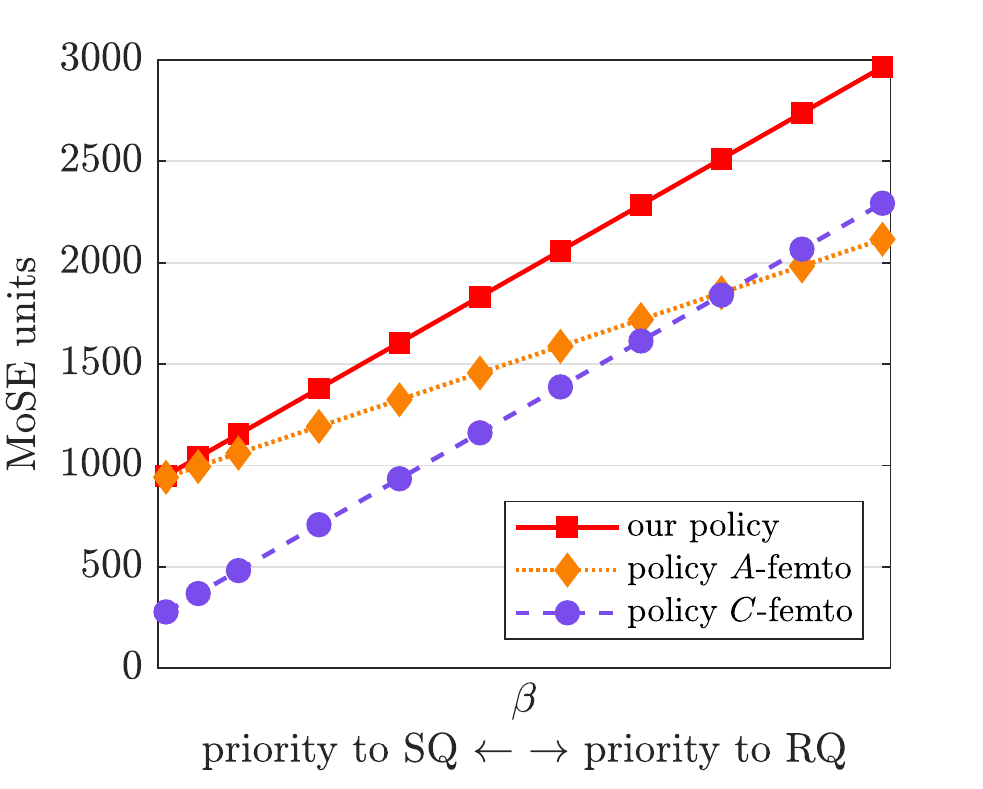}} 
\subfloat[$\psi$ and $\varphi$ logarithmic]{\includegraphics[width=6cm,trim={0.09cm 0cm 0.55cm 0.1 cm},clip]{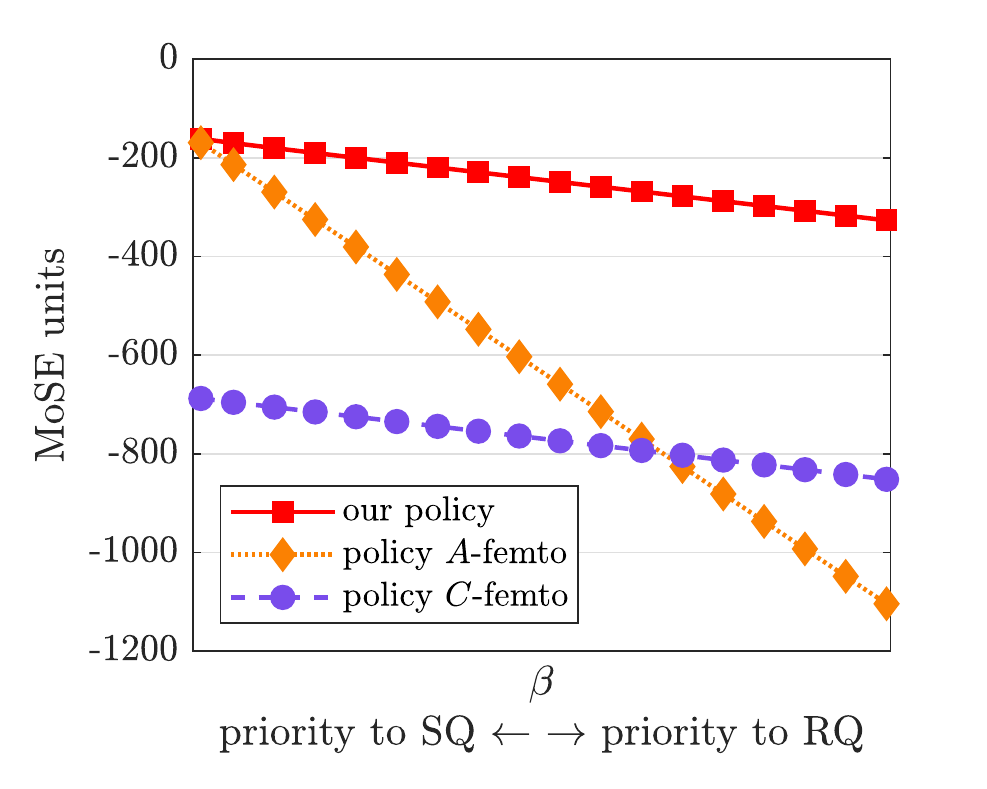}} 
\caption{Scenario 3, MoSE versus $\beta$ for different types of SQ ($s_{uj}=\psi(b_{uj})$) and RQ ($\varphi(r_{ui})$) values/functions. Comparison of our policy with the A-femto and C-femto policies that are based on the algorithm proposed in~\cite{femto_JOURNAL2013}.}  \label{trio2}
\end{figure*}

Figure \ref{trio1}(c) demonstrates that, for values of $\beta$ close to $0$ and values of $r_d$ close to $1$, the performance in SQ  for both policies naturally drops when $r_{min}=0.6$ in comparison to the performance when $r_{min}=0$. This is because fewer contents can be recommended per user and these can largely differ from one user to the next. Therefore, only a few of them can be cached due to the limited cache capacity, and less cache hits will occur. In fact, in the dataset used for this experiment, on average, for every user, only $3\%$ of the catalogue is of utility greater than or equal to $0.6$.

\theoremstyle{definition} \newtheorem{obs_rmin}[obs_optimal]{Observation} 
\begin{obs_rmin} 
Our policy does not choose to radically compromise RQ, leading to similar performance tradeoffs even when additional strict constraints on RQ are imposed.
\end{obs_rmin}

We notice that the tradeoff points of our policy for $r_{min}=0$ and $r_{min}=0.6$ coincide for most of the values of $\beta$, while the maximum observed gain of the latter over the former in terms of RQ is $25\%$. Finally, we observe that even the constraint version of our (close-to-optimal) policy is still able to outperform CAwR with looser constraints on~RQ.

\subsection{Scenario 3}

 So far, we studied scenarios with a single cache in order to be able to compare the performance of the proposed policy with the related work.
We remind the reader that the approximation guarantees of our policy hold for arbitrary networks of caches where users might have access to more than one cache.  The algorithm proposed in \cite{femto_JOURNAL2013} makes caching decisions taking into account such coverage overlaps. However, this problem setup does not contain recommendations. In this scenario, we evaluate the performance, in terms of MoSE, of our policy and some non-joint policies whose caching decisions are made according to \cite{femto_JOURNAL2013}.

We consider a cellular network in a square area of $500$~$m^2$ with $9$ small-cell BS (helpers) and a macro-cell base station (the large cache of our scenario). A total of $100$ users are placed in the area according to a homogeneous Poisson point process (in line with the related works  \cite{femto_JOURNAL2013}, \cite{sermpezis2018sch-jsac}), while helpers are placed in a grid. Helpers' communication ranges are set to $200$~$m$, which results in an average of $3.5$ helpers per user. In this scenario, we will measure the SQ as a function of the  estimated bitrate. More precisely, we assume that $s_{uj}=\psi(b_{uj})$, where $b_{uj}$ are the estimated bitrate that can be supported between user $u$ and cache $j$ and  $\psi$ is an increasing function of $b_{uj}$. Without loss of generality, we assume that the rate from the large cache (or macro-cell cache) $C_0$ is $0.5$ Mbps, while the $b_{uj}$ values for edge caches are chosen  randomly between $2$ and $15$ Mbps\footnote{As we are interested in capturing both wired (CDN) and wireless (femto-caching) scenarios, the physical layer details are beyond the scope of this analysis.}.  In fact, the required Internet connection speed on YouTube \cite{youtube_system_req} is $0.5$ Mbps, and the recommended speed to watch a video in  $4K$ is $20$ Mbps.

We consider a subset of $6000$ unit-sized contents of the Movielens dataset and  $\alpha_u$ and $\varphi$, as in Scenario $2$. We set  the helper's capacity to $1.5\%$ of the catalogue size and $N=5$.  We will compare the performance of our policy, in terms of the MoSE, with two policies that are based on the algorithm proposed in~\cite{femto_JOURNAL2013}:

\theoremstyle{remark} \newtheorem*{lala}{A-femto and C-femto policies}
\begin{lala} 
They generalize the  policies  A and C described in Section \ref{subsec:joint} in a network of caches. They both make the caching decisions based on the femto-caching policy proposed in \cite{femto_JOURNAL2013} that takes into account the fact that users have access to multiple caches in the network. Then, the recommendations part of the policies A and C is applied.
\end{lala}

 For $\psi$ being the identity function, \emph{i.e.,} $s_{uj}=b_{uj}$ and for different values of $\beta>0$,  the achieved MoSE of our policy, the  A-femto, and the C-femto policies are shown in  Fig. \ref{trio2}(a). 
We observe  that, for $\beta$ close to $0$, {\it i.e.}, priority is given to the SQ, the performance of the A-femto policy and our policy coincide. This is because both policies make the same caching and recommendation decisions, {\it i.e.}, cache and recommend the most popular items.
The MoSE achieved by the C-femto policy is lower since, although it provides the best RQ, the recommended items are not necessarily among the cached ones and thus, they need to be retrieved from the large cache at the cost of lower SQ. In fact, this is illustrated in small scale in the toy example in Sec.~\ref{subsec:joint}. As $\beta$ increases,  the priority moves towards RQ, and hence, the performance of the A-femto policy starts to worsen until it is dominated by the one of the  C-femto policy. Our policy continues to perform better than both of them as a result of caching and recommendation orchestration. 
Furthermore, the performance gap between our policy and the C-femto policy remains constant.

\theoremstyle{definition} \newtheorem{obs_multicache}[obs_optimal]{Observation} 
\begin{obs_multicache} 
The performance gains of the proposed policy over  non-joint policies are prominent in generic networks of caches as well.
\end{obs_multicache}

 In the cases studied above, we have considered the SQ and RQ functions ({\it i.e.}, $\psi$ and $\varphi$ respectively) being the identity or the function in \eqref{psi_function_hit} and the logarithmic function respectively. One might wonder how these choices affect the performance of our policy. For this reason, we explore their impact here. We ran on the same dataset as above the  experiment for: {\it i)} both $\psi$ and $\varphi$ being linear functions (Fig.~\ref{trio2}(b)), and {\it ii)} both $\psi$ and $\varphi$ being logarithmic functions (Fig.~\ref{trio2}(c)). We remind the reader that the rationale for the logarithmic function has been elaborated in Sec.~\ref{subsec:QoE}.
We notice that the relative improvement in performance of the proposed policy in comparison to the A-femto and C-femto policies are similar for the different choices of functions considered (Fig.~\ref{trio2}(a)-(c)). We note that a variety of coefficients of these functions have been considered in every case. Furthermore, we observe that even though the range of the y-axis varies in Fig.~\ref{trio2} (a)-(c), the relative gains are similar and the preceding analysis on the relative performance of the three policies holds in every case. 

\theoremstyle{definition} \newtheorem{obs_multicache2}[obs_optimal]{Observation} 
\begin{obs_multicache2} 
The performance improvements are consistent for different choices of SQ and RQ functions.
\end{obs_multicache2}

\section{Related Work}
\label{sec:related}

\textbf{Hierarchical caching.} Optimization of hierarchical caching ({\it e.g.}, CDNs or ICNs) has been widely explored both in the context of wired~\cite{borst2010} and wireless networks~\cite{femto_JOURNAL2013}. Various aspects of this problem have been explored such as caching for different video streaming qualities~\cite{poularakis2014} etc. See, for example, a recent survey on caching in~\cite{JSAC:caching-survey}. Nevertheless, these works are oblivious to the impact of the recommendations, beyond the simple (usually IRM) popularity model used as input. 

\noindent \textbf{Caching-recommendation interplay.} In an early work in this direction~\cite{content-recommendation-swarming}, the authors propose heuristic algorithms for recommendations in P2P networks that take into account both service cost and user preferences. In~\cite{giannakas2018show-me-cache}, the authors propose a recommendation algorithm that tries to bias requests towards cached contents. In a similar spirit,  \cite{cache-centric-video-recommendation} proposes a reordering of the videos appearing in YouTube's related videos section by ``pushing'' on top of the list the cached items. 
However, the caching policy in these works is fixed. 
Considering now different setups, \cite{sermpezis2018sch-jsac} introduces the concept of ``soft cache hits'' that allows the user to choose an alternative cached content if the initially requested is not locally cached. 
Although the caching policies in~\cite{sermpezis2018sch-jsac} are recommendation-aware, the recommender comes \emph{after} the caching decisions. 
 A decomposition algorithm for the joint problem is  proposed in~\cite{chatzieleftheriou2019joint-journal}  for a problem setup closer to our work. 
Targeting cache hit rate maximization, their policy first decides on caching, accounting for the impact of recommendations, and then adjusts the recommendations in order to favor cached items. However, no performance guarantees are given. Similarly, \cite{qi2018optimizing} proposes a decomposition heuristic for the joint caching and recommendation problem. In \cite{liu2019deep}, the authors adopt a different approach and employ machine learning techniques to devise caching and recommendation policies. Finally, the authors in~\cite{Chen:Joint-Globecom18} formulate a joint problem in the somewhat different context of prefetching content over a time-varying channel. 

\noindent \textbf{Joint optimization theory.} Submodularity-based proofs for caching-related problems have flourished since the seminal paper of~\cite{femto_JOURNAL2013}, where the focus is on one set of variables (caching). 
 The decomposition and submodularity method we use is similar in spirit to the methods in~\cite{modiano-optimization-backbone} and \cite{dehghan2016complexity}. While the former  studies quite a different problem than ours, the latter proposes an approximation algorithm for the joint caching and routing problem in cache networks.


\section{Conclusion}
\label{sec:conclusion}

In this paper, we studied the problem of jointly making caching and recommendation decisions in a generic caching network.  This is a problem of great interest as entities like Netflix can now manage both caching and recommendations in their network. To this end, we introduced a metric of user's streaming experience~(MoSE) as a balanced sum of SQ~(affected by the caching allocation) and RQ~(determined by the recommendations the user receives) and we formulated the problem of maximizing users' MoSE. This formulation captures the user's expectations for SQ and RQ from a recommendation-driven application, while, at the same time, allows us to explore the underlying SQ-RQ tradeoffs of the problem. Moreover, the model we considered is generic since SQ can be replaced by any caching gain/profit. We proposed a polynomial-time algorithm that has $\frac{1}{2}$-approximation guarantees (or $\frac{1-1/e}{2}$ in the case of contents of heterogeneous size). Our numerical results in realistic scenarios show important performance gains of our algorithm with respect to baseline schemes and existing heuristics. 
An interesting direction for future work is to consider transmission capacities for the caches and introduce request routing as a variable of the  problem.

\bibliographystyle{IEEEtran}
\bibliography{IEEEabrv,JOURNAL_version-FOR-ARXIV}




\end{document}